\newif\ifFull
\def\BState{\State\hskip-\ALG@thistlm}
\newtheorem{theorem}{Theorem}[section]
\newtheorem{lemma}{Lemma}[section]
\newtheorem{definition}{Definition}[section]
\newtheorem{proposition}{Proposition}[section]
\newcommand{\obl}[1]{\tilde{#1}}
\newcommand{\dfp}[1]{\hat{#1}}
\newcommand{\oshuffle}{\mathsf{oblivious\_shuffle}}
\newcommand{\dummy}{\mathsf{dummy}}
\newcommand{\op}{\mathsf{op}}
\newcommand{\indexa}{\mathsf{a}}
\newcommand{\data}{\mathsf{data}}
\newcommand{\putop}{\mathsf{write}}
\newcommand{\getop}{\mathsf{read}}
\newcommand{\adv}[1]{\mathsf{A}_{#1}}
\newcommand{\privmem}{m}
\newcommand{\GDP}{GDP}
\newcommand{\distinct}{\mathsf{distinct}}
\newcommand{\Section}{\S}
\newlength\myindent
\newcommand\bindent{%
  \begingroup
  \setlength{\itemindent}{\myindent}
  \addtolength{\algorithmicindent}{\myindent}
}
\newcommand\eindent{\endgroup}
\newcommand{\ptr}{\mathsf{ptr}}
\newcommand{\return}{\textbf{return}~}
\newcommand{\ext}[1]{{\color{gray}#1}}
\setlist[description]{leftmargin=0.4cm,labelindent=\parindent}
\setlist[enumerate]{leftmargin=0.6cm,labelindent=\parindent}
\setlist[itemize]{leftmargin=0.6cm,labelindent=\parindent}
\title{An Algorithmic Framework For Differentially Private Data Analysis on Trusted Processors}\let\Title\@title
\author{
\makebox[0.2\linewidth]{Joshua Allen }\\\textbf{Harsha Nori}\\
\And
\makebox[0.2\linewidth]{Bolin Ding\thanks{Current affiliation: Alibaba Group. Work done while at Microsoft.}}\\\textbf{Olga Ohrimenko}\\ \\Microsoft
\And
\makebox[0.2\linewidth]{Janardhan Kulkarni}\\\textbf{Sergey Yekhanin}\\
}
\begin{document}

\maketitle

\begin{abstract}
Differential privacy has emerged as the main definition for private data analysis and machine learning. The {\em global} model of differential privacy, which assumes that users trust the data collector, provides strong privacy guarantees and introduces small errors in the output. In contrast, applications of differential privacy in commercial systems by Apple, Google, and Microsoft, use the {\em local model}. Here, users do not trust the data collector, and hence randomize their data before sending it to the data collector. Unfortunately, local model is too strong for several important applications and hence is limited in its applicability. In this work, we propose a framework based on trusted processors and a new definition of differential privacy called {\em Oblivious Differential Privacy}, which combines the best of both local and global models. The algorithms we design in this framework show interesting interplay of ideas from the streaming algorithms, oblivious algorithms, and differential privacy. 
\end{abstract}

%!TEX root = main.tex

\section{Introduction}
\label{sec:intro}
Most large IT companies rely on access to raw data from their users to train machine learning models. 
However, it is well known that models trained on a dataset can release private information about the users that participate in the dataset~\cite{DBLP:journals/corr/abs-1802-08232,DBLP:conf/sp/ShokriSSS17}. 
With new GDPR regulations and also ever increasing awareness about privacy issues in the general public, doing private and secure machine learning has become a major challenge to IT companies. 
To make matters worse, while it is easy to spot a violation of privacy when it occurs, it is much more tricky to give a rigorous definition of it.

{\em Differential privacy (DP)}, introduced in the seminal work of  Dwork {\em et al.} \cite{TCC06}, is arguably the only mathematically rigorous definition of privacy in the context of machine learning and big data analysis.
Over the past decade, DP has established itself as the defacto standard of privacy with a vast body of research and growing acceptance in industry.
Among its many strengths, the promise of DP is intuitive to explain: No matter what the adversary knows about the data, the privacy of a single user is protected from output of the data-analysis. 
A differentially private algorithm guarantees that the output does not change significantly, as quantified by a parameter $\epsilon$, if the data of any single user is omitted from the computation, which is formalized as follows.

\begin{definition} A randomized algorithm $\mathcal{A}$ is $(\epsilon,\delta)$-differentially private if for  any two neighboring databases $\mathcal{D}_1, \mathcal{D}_2$ any subset of possible outputs $S \subseteq \mathcal{Z},$ we have:
	$$\mathrm{Pr}\left[{\mathcal{A}(D_1) \in S}\right] \leq e^\epsilon\cdot \mathrm{Pr}\left[{\mathcal{A}(D_2) \in S}\right] + \delta.$$
\label{def:gdp}
\end{definition}

This above definition of DP is often called \emph{global differential privacy} (GDP). It assumes that users are willing to trust the data collector.
There is a large body of work on GDP, and many non-trivial machine learning problems can be solved in this model very efficiently. See authoritative book by Dwork and Roth ~\cite{privbook} for more details.
However in the context of IT companies, adoption of \GDP{} is not possible as there is no trusted data collector --
users want privacy of their data from the data collector.
Because of this, all industrial deployments of DP by Apple, Google, and Microsoft, with the exception of Uber~\cite{Johnson:2018:TPD:3187009.3177733},
have been set in the so called {\em local model of differential privacy (LDP)}~\cite{ccs:ErlingssonPK14,Ding2017CollectingTD,appledp}.
In the LDP model, users randomize their data {\em before} sending it
to the data collector.

\begin{definition} A randomized algorithm $\mathcal{A}: \mathcal{V} \rightarrow \mathcal{Z}$ is $\epsilon$-locally differentially private ($\epsilon$-LDP) if for any pair of values $v, v' \in \mathcal{V}$ held by a user  and any subset of output $S \subseteq \mathcal{Z},$ we have:
	$$\mathrm{Pr}\left[{\mathcal{A}(v) \in S}\right] \leq e^\epsilon\cdot \mathrm{Pr}\left[{\mathcal{A}(v') \in S}\right].$$
\label{def:ldp}
\end{definition}
\vspace{-5mm}
Despite its very strong privacy guarantees, the local model has several drawbacks compared to the global model:
many important problems cannot be solved in the LDP setting within a desired level of accuracy. Consider the simple task of understanding the number of distinct websites visited by users, or words in text data. This problem admits no good algorithms in LDP setting, whereas in the global model the problem becomes trivial. 
Even for  problems that can be solved in LDP setting \cite{ccs:ErlingssonPK14,stoc:BassilyS15,BassilyNST17,Ding2017CollectingTD}, errors and $\epsilon$ are significantly larger compared to the global model. For example, if one is interested in understanding the histogram  of websites visited by users, in the LDP setting an optimal algorithm achieves an error of $\Omega(\sqrt{n})$, whereas in the global model error is $O(\frac{1}{\epsilon})$.
See experiments and details in~\cite{prochlo} for scenarios where the errors introduced by (optimal) LDP algorithms are unacceptable in practice.
Finally, in GDP there are several results that give much stronger guarantees than the  standard composition theorems: for example, one can answer exponentially many linear queries (even online) using private multiplicative weight update algorithm \cite{Dwork:2009}.
Such results substantially increase the practical relevance of GDP algorithms.
However, the local model of differential privacy admits no such elegant solutions.
 
These drawbacks of LPD naturally lead to the following question:

{\em Are there ways to bridge the local and global differential privacy models such that users enjoy the privacy guarantees of the local model whereas the data collector enjoys the accuracy of the global model?}

This question has attracted a lot of interest in the research community recently.
In remarkable recent results, the authors of~\cite{DBLP:conf/crypto/BalleBGN19,CheuSUZZ19,ErlingssonFMRTT19} propose
a {\em secure shuffle} as a way to bridge local and global models of~DP. 
They show that if one has access to a {\em user anonymization primitive}, and if every user uses a local DP mechanism, then the overall privacy-accuracy trade-off is similar to the global model. 
However, access to anonymization primitives that users can trust is a difficult assumption to implement in practice, and only shifts the trust boundary.
For example, implementing the anonymization primitive
via mixnets requires assumption on non-collusion between the mixing servers.
Recall that the main reason most companies adopted LDP setting is because users do not trust the data collector.

In this paper, we propose a different approach based on trusted processors (for example, Intel SGX~\cite{sgx}) and a new definition called Oblivious Differential Privacy (ODP) that  help to design algorithms that enjoy the privacy guarantees of both local and global models; see~Figure~\ref{fig:ourfigs} (left) for an illustration. Our framework gives the following guarantees.
\begin{figure}
  	\centering
	\includegraphics[scale=0.33]{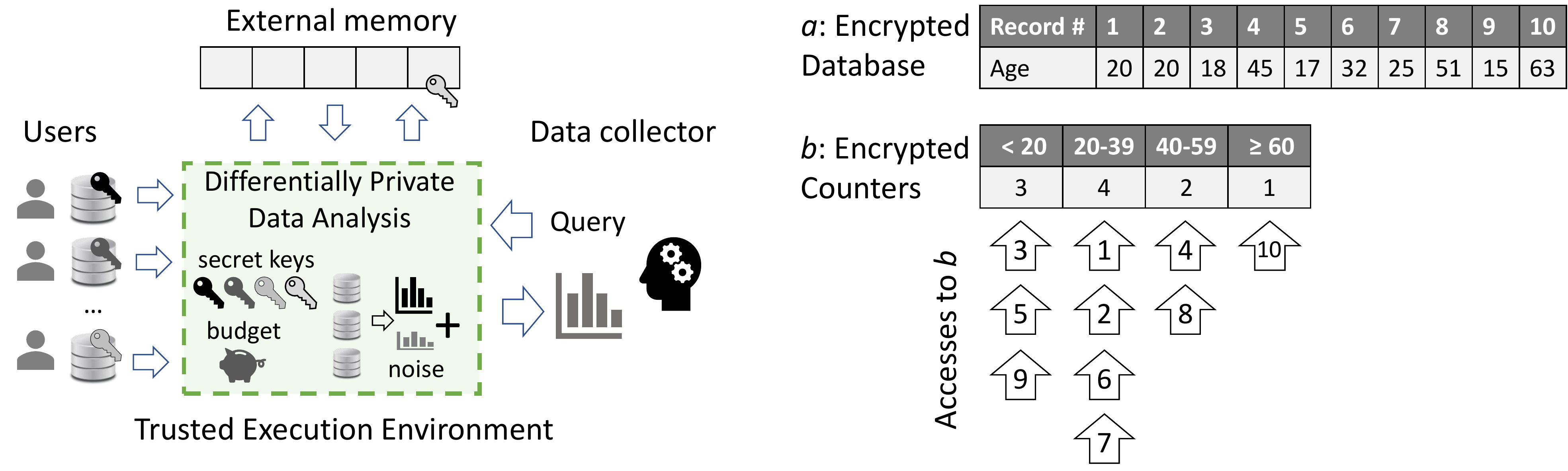}
 \vspace{-0.2cm}
 \caption{\emph{Left:} Secure differentially-private data-analysis.
  \emph{Right:} Visualization of the access pattern of a naive histogram computation over a database and four age ranges counters ($k=4$)
 stored in arrays $a$ and $b$, respectively.
 The code reads a record from $a$, decrypts it, accesses the corresponding age bucket in $b$, decrypts its counter, increments it, encrypts it
and writes it back.
The arrows indicate increment accesses to the histogram counters ($b$) and the numbers correspond to records of $a$ that were accessed prior to these accesses.
An adversary of the accesses to $a$ and $b$ learns the histogram and which database records belong to the same age range.} 
\vspace{-0.4cm}
  \label{fig:ourfigs}
\end{figure}
\begin{enumerate}
\renewcommand{\theenumi}{\Roman{enumi}}%
\item Data is collected, stored, and used in an encrypted form and is protected from the data collector.
\item The data collector obtains information about the data only through the results
of a DP-algorithm.
\end{enumerate}

The DP-algorithms themselves run within a Trusted Execution Environments (TEE) that guarantee
that the data is decrypted only by the processor during the computation and is always encrypted in memory.
Hence, raw data is inaccessible to anyone, including the data collector.
To this end, our framework is similar to other systems for private data analysis based on trusted processors,
including machine learning algorithms~\cite{Ohrimenko2016} and
data analytical platforms such as \textsc{Prochlo}~\cite{prochlo, vc3}.
{Recently, systems for supporting TEEs have been announced by Microsoft\footnote{\href{https://azure.microsoft.com/en-us/blog/introducing-azure-confidential-computing/}{``Introducing Azure confidential computing''}, accessed October 26, 2019.} and Google
\footnote{\href{https://cloudplatform.googleblog.com/2018/05/Introducing-Asylo-an-open-source-framework-for-confidential-computing.html}{``Introducing Asylo: an open-source framework for confidential computing''}, accessed October 26, 2019.}}, and we anticipate a wide adoption of this model for doing private data analysis and machine learning.

The private data analysis within trusted processors has to be done carefully since the rest
of the computational environment is untrusted and is assumed to be under adversary's control.
Though the data is encrypted, the adversary can learn private information
based on memory access patterns through caches and page faults. In particular, {\em memory access patterns} are often dependent on private information and have been shown to be sufficient
to leak information~\cite{Osvik2006,Bernstein2005, sgxsidechannels}. 
Since differentially private algorithms in the global model have been designed with a trusted data collector in mind,
they are also susceptible to information leakage through their access patterns.

The main goal of our paper is to formalize the design of differentially private
algorithms in the trusted processor environments. Our contributions are summarized below:
\begin{itemize}
\vspace{-0.12cm}\item Building on the recent works of~\cite{Ohrimenko2016,prochlo}, we propose a framework that enables collection and analysis of data in the global model of differential privacy without relying on a trusted curator. Our framework uses encryption and secure processors to protect data and computation such that
only the final differentially private output of the computation is revealed.
\vspace{-0.1cm}\item Trusted execution environments impose certain restrictions on the design of algorithms.
We formalize the mathematical model for designing differentially private algorithms in TEEs.
\vspace{-0.1cm}\item We define a new class of differentially-private algorithms (Definition \ref{def:ODP}) called \emph{obliviously differentially private algorithms} (ODP), which ensure that
privacy leakage that occurs through algorithm's memory access patterns and the output
together satisfy the DP guarantees.
\vspace{-0.1cm}\item We design ODP algorithms with provable performance guarantees for some commonly used statistical routines such as computing the number of distinct elements, histogram, and  heavy hitters. 
We prove that the privacy and error guarantees of our algorithms (Theorems (\ref{thm:distinct}, \ref{thm:histogram},\ref{thm:heavyhitters}) are significantly better than in the local model, and obliviousness does not come at a steep price.
\end{itemize}
A technically novel aspect of our paper is that it draws ideas from various different fields: streaming algorithms, oblivious algorithms, and differentially private algorithms. This fact becomes clear in~\Section4 where we design ODP algorithms.

\paragraph{Related work}
There are several systems that propose confidential data analysis using TEEs~\cite{Ohrimenko2016,opaque,vc3,prochlo}.
\textsc{Prochlo}~\cite{prochlo}, in particular, provides support for differential private data analysis.
While \textsc{Prochlo} emphasizes more on the system aspects (without formal proofs), our work gives a formal
framework based on oblivious differential privacy for analyzing and designing algorithms for private data analysis in TEEs. Oblivious sampling algorithms proposed in~\cite{odpsample}
generate samples securely s.t.~privacy amplification
can be used when analyzing DP algorithms
executed on the samples in TEE.

%!TEX root = main.tex

\section{Preliminaries}
\label{sec:prelim}
\subsection{Secure Data Analysis with Trusted Processors and Memory Access Leakage}
\label{sec:sgx}

A visualization of our framework is given in ~Figure~\ref{fig:ourfigs}~(left). We use Intel Software Guard Extensions (SGX) as an example of a trusted processor.
Intel SGX~\cite{sgx} is a set of CPU instructions that allows user-level code to allocate a private region of memory,
called an enclave (which we also refer to as a TEE), which is accessible only to the code running in an enclave.
The enclave memory is available in raw form only inside the physical processor package, but it is encrypted and integrity protected when written to memory.
As a result, the code running inside of an enclave is isolated from the rest of the system, including the operating system.
Additionally, Intel SGX supports software attestation~\cite{sgx2}
that allows the enclave code to get messages signed with a private key of the processor along with a digest
of the enclave.
This capability allows users
to verify that they are communicating with a specific piece of software (i.e., a differentially-private algorithm) running in an enclave hosted by the trusted hardware.
Once this verification succeeds, the user can establish a secure communication channel with the enclave (e.g., using TLS)
and upload data.
When the computation is over, the enclave, including the local variables and data, is deleted.

An enclave can access data that is managed by the trusted processor (e.g., data in registers and caches) or
by the software that is not trusted (e.g., an operating system). As a result, in the latter case, data in the external memory has to be encrypted and integrity protected
by the code running inside of an enclave.
Unfortunately, encryption and integrity are not sufficient to protect against the adversary described in the introduction
that can see the addresses of the data being accessed even if the data is encrypted.
There are several ways the adversary can extract the addresses, i.e., the {\em memory access pattern}.
Some typical examples are: an adversary with physical access to a machine can attach probes to a memory bus,
an adversary that shares the same hardware as the victim enclave code (e.g., a co-tenant)
can use shared resources such as caches to observe cache-line accesses,
while a compromised operating system can inject page faults and observe page-level accesses.
Memory access patterns have been shown to be sufficient to extract secrets and data from cryptographic code~\cite{Kocher1996,Page2002,Bernstein2005,Percival2005,Osvik2006},
from genome indexing algorithms~\cite{Brasser:2017:SGE:3154768.3154779}, and from image and text applications~\cite{sgxsidechannels}.
(See Figure~\ref{fig:ourfigs} (right) for a simple example of what can be extracted by observing accesses
of a histogram computation.)
As a result, accesses leaked through memory side-channels \footnote{This should not be confused with vulnerabilities introduced by floating-point implementations~\cite{Mironov:2012:SLS:2382196.2382264}.} undermine the confidentiality promise of
enclaves~\cite{sgxsidechannels,malwareguard,sgxcacheattacks,liu2015last,sanctum,Brasser:2017:SGE:3154768.3154779,DBLP:conf/ches/MoghimiIE17}.

\subsection{Data-Oblivious Algorithms}\label{sec:obl}
Data-oblivious algorithms~\cite{GoldreichO96,Ohrimenko2016,pathoram,Goodrich:2011:DEA:1989493.1989555} are designed to protect memory addresses against the adversary described in \Section\ref{sec:sgx}:
they produce data access patterns that appear to be independent of the sensitive data
they compute on.
They can be seen as external-memory algorithms that perform computation inside of small private memory
while storing the encrypted data in the external memory and accessing it in a \emph{data-independent}
manner.
We formally capture this property below.
Suppose external memory is represented by an array $a[1,2,...,M]$ for some large value of $M$.

\begin{definition}[Access pattern]
Let $\op_j$ be either a $\getop(\indexa[i])$ operation that reads data from the location $a[i]$
to private memory or a $\putop(\indexa[i])$ operation that copies some data from the private memory to the external memory
$a[i]$. Then, let $s := (\op_1, \op_2, \ldots, \op_t)$
denote an access pattern of length $t$ of algorithm $\mathcal{A}$ to the external memory.
\end{definition}
Note that the adversary can see only the addresses accessed by the algorithm and whether it is
a read or a write. It cannot see the data since it is encrypted using
probabilistic encryption that guarantees that the adversary cannot tell if two ciphertexts correspond to the same record or two different ones.

\begin{definition}[Data-oblivious algorithm]
\label{def:obl1}

An algorithm $\mathcal{A}$ is data-oblivious if for any two inputs $I_1$ and $I_2$, and any subset of possible memory access patterns $S \subseteq \mathcal{S}$,
where $\mathcal{S}$ is the set of all possible memory access patterns produced by an algorithm, we have:
	$$\mathrm{Pr}\left[{\mathcal{A}(I_1) \in S}\right] = \mathrm{Pr}\left[{\mathcal{A}(I_2) \in S}\right]$$
\end{definition}

It is instructive to compare this definition with the definition of differential privacy. 
The definition of oblivious algorithms can be thought of as a generalization of DP to memory access patterns, where $\epsilon = 0$ and the guarantee should hold even for {\em non-neighboring} databases.
Similar to external memory algorithms, the overhead of a data-oblivious algorithm is measured in
the number of accesses it makes to external memory, while computations on private memory are assumed
to be constant. Some algorithms naturally satisfy Definition~\ref{def:obl1} while others require changes to how they operate.
For example, scanning an array is a data-oblivious algorithm since for any array of the same size every element of the array is accessed.
Sorting networks~\cite{batcher} are also data-oblivious as element identifiers accessed by compare-and-swap operations are fixed based on the size of the array and not its content.
On the other hand, quicksort is not oblivious as accesses depend on the comparison of the elements with the pivot element.
As can be seen from Figure~\ref{fig:ourfigs}~(right), a naive histogram algorithm is also not data-oblivious.
\ifFull
{(See appendix for overheads of several oblivious algorithms.)
\else{(See the supplementary material for overheads of several oblivious algorithms.)}
\fi

In this paper, we focus on measuring the
overhead in performance in terms of the number
of memory accesses of oblivious algorithms.
We omit the cost of setting up a TEE,
which is a one-time cost proportional
to the size of the code and data loaded in a TEE,
and the cost of encryption and decryption, which is linear in the size
of the data and is often implemented in hardware.

\textbf{Oblivious RAM (ORAM)} is designed to hide the indices of accesses to an array of size~$n$,
i.e., it hides how many times and when an index was accessed.
There is a naive and inefficient way to hide an access by reading and writing to every index.
Existing ORAM constructions incur sublinear overhead 
by using specialized data structures and re-arranging the external memory~\cite{GoldreichO96,GMOT12,Pinkas2010,DBLP:conf/ndss/StefanovSS12,wsc-bcomp-08,gm-paodor-11}.
The best known ORAM construction
has $O(\log n)$~\cite{cryptoeprint:2018:892} overhead.
Since it incurs high constants,
Path ORAM~\cite{pathoram} with the overhead of $O((\log n)^2)$ is a preferred option in practice.
ORAM can be used to transform any RAM program whose number of accesses does not depend on
sensitive content; otherwise, the number of accesses needs to be padded.
However, if one is willing to reveal the algorithm being performed on the data, then for some computations the overhead
of specialized constructions can be asymptotically lower than of the one based on ORAM.

\textbf{Data-oblivious shuffle~\cite{melbshuffle}}
takes as a parameter an array $a$ (stored in external memory) of size $n$ and a permutation $\pi$
and permutes $a$ according to $\pi$ such that $\pi$ is not revealed to the adversary.
The Melbourne shuffle~\cite{melbshuffle}
is a randomized data-oblivious shuffle that makes $O(cn)$ deterministic accesses to external memory, assuming private memory of size $\sqrt[c]{n}$,
and fails with negligible probability.
For example, for $c=2$ the overhead of the algorithm is constant
as any non-oblivious shuffle algorithm has to make at least $n$ accesses.
The Melbourne shuffle with smaller private memories of size $\privmem = \omega(\log n)$ incurs slightly higher
overhead of $O(n \log n / \log \privmem)$ as showed in~\cite{DBLP:journals/corr/PatelPY17}.
We will use $\oshuffle(a)$ to refer to a shuffle of $a$ according to some random permutation that is hidden from the adversary.

Note that the user anonymization primitive
that the shuffle model
of differential privacy~\cite{DBLP:conf/crypto/BalleBGN19,CheuSUZZ19,ErlingssonFMRTT19}
relies on
can be implemented in TEEs with a data-oblivious shuffle~\cite{prochlo}. 
However, in this case the trust model of the shuffle model DP will be the same as
described in the next section.
%!TEX root = main.tex

\section{Algorithmic Framework and Oblivious Differential Privacy}
\label{sec:framework}
We now introduce the definition of Oblivious Differential Privacy (ODP), and give an algorithmic framework for the design of ODP-algorithms for a system based on trusted processors (\Section\ref{sec:sgx}).
As we mentioned earlier, off-the-shelf DP-algorithms may not be suitable for TEEs for two main reasons.

\begin{description}
	\item[Small Private Memory:] The private memory, which is protected from the adversary, available for an algorithm within a trusted processor is much smaller than the data the algorithm has to process. A reasonable assumption on the size of private memory is {\em polylogarithmic} in the input size. 
	\item[Access Patterns Leak Privacy:] An adversary who sees memory access patterns of an algorithm {to external memory} can learn useful information about the data, compromising the differential privacy guarantees of the algorithm.
\end{description}

Therefore, the algorithm designer needs to guarantee that memory access patterns do not reveal any private information, and the overall algorithm is differentially private
\footnote{Here, data collector runs algorithms on premise. See \ifFull appendix \else the supplementary material \fi for restrictions in the cloud setting.}. 
{\em To summarize, in our attacker model private information is leaked either by the output of a DP algorithm or through memory access patterns.}
We formalize this by introducing the notion of {\em Oblivious Differential Privacy}, which combines the notions of differential privacy and oblivious algorithms.

\begin{definition} 
\label{def:ODP}
Let $D_1$ and $D_2$ be any two neighboring databases that have exactly the same size $n$ but differ in one record.
A randomized algorithm $\mathcal{A}$ that has small private memory (i.e., sublinear in~$n$)
and accesses external memory
is $(\epsilon,\delta)$-obliviously differentially private (ODP), if for any subset of possible memory access patterns $S \subseteq \mathcal{S}$ and any subset of possible outputs $O$ we have:
	$$\mathrm{Pr}\left[{\mathcal{A}(D_1) \in (O,S)}\right] \leq e^\epsilon\cdot \mathrm{Pr}\left[{\mathcal{A}(D_2) \in (O, S)}\right] + \delta.$$
\end{definition}
We believe that the above definition gives a systematic way to design DP algorithms in TEE settings. 
An algorithm that satisfies the above definition guarantees that the private information released through output of the algorithm {\em and} through the access patterns is quantified by the parameters $(\epsilon, \delta)$.
Similar to our definition, Wagh~\textit{et al.}~\cite{DBLP:journals/corr/WaghCM16}
and more recently, in a parallel work, Chan~\textit{et al.}~\cite{cryptoeprint:2017:1033}
also consider relaxing the definition of obliviousness for hiding access patterns from an adversary.
However, their definitions serve complementary purpose to ours: {\em they apply DP to oblivious algorithms, whereas we apply obliviousness to DP algorithms.} 
This is crucial since algorithms that satisfy the definition in~\cite{DBLP:journals/corr/WaghCM16,cryptoeprint:2017:1033} may not satisfy DP when the output is released,
which is the main motivation for using differentially private algorithms. 
Our results together with \cite{cryptoeprint:2017:1033} highlight that DP and oblivious algorithms is an interesting area for further research for private and secure ML.

\emph{Remarks:} In the real world, the implementation of a TEE relies on cryptographic algorithms (e.g., encryption and digital signatures)
that are computationally secure and depend on a security parameter of the system.
As a result any differentially private algorithm operating inside of a TEE
has a non-zero parameter~$\delta$ that is negligible in the security parameter.

In the paper, we only focus on memory accesses;
but our definitions and framework can be easily extended to other forms of side-channel attacks such as timing attacks (e.g., by incorporating
the time of each access), albeit requiring
changes to algorithms presented in the next section
to satisfy them.
\vspace{-3mm}

\paragraph{Connections to Streaming Algorithms:}
One simple strategy to satisfy Definition \ref{def:ODP} is to take a DP-algorithm and guarantee that every time the algorithm makes an access to the public memory, it makes a pass over the entire data. 
However, such an algorithm incurs a multiplicative overhead of $n$ on the running time, and the goal would be to minimize the number of passes made over the data.
Interestingly, these algorithms precisely correspond to the {\em streaming algorithms}, which are widely studied in big-data analysis. 
In the streaming setting, one assumes that we have only $O(\log n)$ bits of memory and data stream consists of $n$ items, and the goal is to compute functions over the data.
Quite remarkably, several functions can be approximated very well in this model. 
See \cite{Muthukrishnan2003DataSA} for an extensive survey.
Since there is a large body of work on streaming algorithms, we believe that many algorithmic ideas there can be 
used in the design of ODP algorithms.
We give example of such algorithms for distinct elements \Section\ref{sec:DES} and heavy hitters problem in \Section\ref{sec:heavyhitters}.

\vspace{-3mm}
\paragraph{Privacy Budget:} Finally, we note that the system based on TEEs can support interactive data analysis where
the privacy budget is hard-coded (and hence verified by each user before they supply their private data).
The data collector's queries decrease the budget appropriately
and the code exits when privacy budget is exceeded.
Since the budget is maintained as a local variable within the TEE it is protected from replay attacks
while the code is running. If TEE exits, the adversary cannot restart it without notifying the users
since the code requires their secret keys. These keys cannot be used for different instantiations
of the same code as they are also protected by TEE and are destroyed on exit.

%!TEX root = main.tex
\section{Obliviously Differentially Private Algorithms}
\label{sec:odpalgos}
In this section, we show how to design ODP algorithms for the three most commonly used statistical queries: counting the number of distinct elements in a dataset, histogram of the elements, and reporting heavy hitters.
The algorithms for these problems exhibit two common themes: 1) For many applications it is possible to design DP algorithms without paying too much overhead to enforce obliviousness.
2) The interplay of ideas from the streaming and oblivious algorithms literature in the design of ODP algorithms. 

Before we continue with the construction of ODP algorithms, we make a subtle but important point. 
Recall that in our Definition \ref{def:ODP}, we require that two neighboring databases have exactly the same size. If the neighboring databases are of different sizes, then the access patterns can be of different lengths, and it is impossible to satisfy the ODP-definition. This definition does not change the privacy guarantees as in many applications the size of the database is known in advance; e.g., number of users of a system. However, it has implications on the sensitivity of the queries. For example, histogram queries in our definition have sensitivity of 2 where in the standard definition it is 1.

\subsection{Number of Distinct Items in a Database}
\label{sec:DES}
As a warm-up for the design of ODP algorithms, we begin with the distinct elements problem. 
Formally, suppose we are given a set of $n$ users and each user $i$ holds an item $v_i \in \{1,2,...,m\}$, where $m$ is assumed to be much larger than $n$. 
This is true if a company wants to understand the number of distinct websites visited by its users or the number of distinct words that occur in text data.
Let $n_v$ denote the number of users that hold the item $v$.
The goal is to estimate $n^*:= |\{ v: n_v > 0\}|$.

We are not aware of a reasonable solution
that achieves an additive error better than~$\Omega(n)$ for this problem in the LDP setting.
In a sharp contrast, the problem becomes very simple in our framework.
Indeed, a simple solution is to do an {\em oblivious sorting}~\cite{AKS,batcher} of the database elements, and then count the number of distinct elements by making another pass over the database.
Finally, one can add Laplace noise with the parameter $\frac{1}{\epsilon}$, which will guarantee that our algorithm satisfies the definition of ODP. This is true as a) the sensitivity of the query is 1, as 
a single user can increase or decrease the number of distinct elements by at most 1; b) we do oblivious sorting.
Furthermore, the expected (additive) error of such an algorithm is $1/\epsilon$. 
\ifFull
The performance of this algorithm depends on the underlying sorting algorithm:
$O(n\log n)$ with AKS~\cite{AKS} and $O(n(\log n)^2)$ with Batcher's sort~\cite{batcher}.
Though the former is asymptotically superior to Batcher's sort, it has high constants~\cite{DBLP:journals/algorithmica/Paterson90}.
\fi
Recall that $n^*$ denotes the number of distinct elements in a database. Thus we get:
\begin{theorem}
\label{thm:distinct}
There exists an oblivious sorting based $(\epsilon, 0)$-ODP algorithm for the problem of finding the number of distinct elements in a database
that runs in time $O(n \log n)$.
With probability at least $1-\theta$, the number of distinct elements output by our algorithm is $(n^* \pm \log(1/\theta) \frac{1}{\epsilon})$.
\end{theorem}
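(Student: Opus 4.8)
The plan is to combine three ingredients: correctness and obliviousness of the sorting-and-scanning step, the sensitivity bound on $n^*$, and a tail bound for the Laplace distribution. First I would argue that the algorithm computes $n^*$ exactly in the absence of noise: after an oblivious sort the equal elements of the database are contiguous, so a single linear scan that counts the positions $i$ at which $a[i] \neq a[i-1]$ (with the first position always counted) returns exactly $|\{v : n_v > 0\}| = n^*$. Both phases are data-oblivious: the sort is a sorting network (AKS~\cite{AKS} or Batcher~\cite{batcher}), whose compare-and-swap indices depend only on $n$, and the scan touches $a[1],\dots,a[n]$ in the fixed order regardless of content; hence the access pattern depends only on $n$, which is common to both neighboring databases by the convention fixed before Definition~\ref{def:ODP}. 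Finally the algorithm releases $\tilde n^* := n^* + \mathrm{Lap}(1/\epsilon)$.

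Next I would establish the privacy guarantee. The access pattern $S$ is independent of the input (it is a deterministic function of $n$), so for the purpose of Definition~\ref{def:ODP} the pair $(O,S)$ factorizes and it suffices to show that the released count is $(\epsilon,0)$-DP as a function of the database. For this I invoke the sensitivity claim: changing one user's item $v_i$ changes at most one $n_v$ from positive to zero or vice versa (and symmetrically one other), but a single record can alter $n^*$ by at most $1$, as already noted in the text preceding the theorem. Thus the global sensitivity of $n^*$ is $1$, and adding $\mathrm{Lap}(1/\epsilon)$ noise is the standard Laplace mechanism, which is $(\epsilon,0)$-DP~\cite{privbook}. Composing with the content-independence of $S$ gives the ODP bound with $\delta = 0$.

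For the accuracy statement I would use the standard Laplace tail bound: if $Z \sim \mathrm{Lap}(b)$ then $\Pr[|Z| > t] = e^{-t/b}$. Here $b = 1/\epsilon$, so $\Pr[\,|\tilde n^* - n^*| > \log(1/\theta)\cdot \tfrac{1}{\epsilon}\,] = e^{-\log(1/\theta)} = \theta$, which is exactly the claimed $(n^* \pm \log(1/\theta)\tfrac{1}{\epsilon})$ interval with probability at least $1-\theta$. The running time is dominated by the oblivious sort, giving $O(n\log n)$ with AKS (the subsequent scan and the single noise draw are $O(n)$ and $O(1)$ respectively).

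I do not anticipate a serious obstacle; the one point requiring a little care is making the sensitivity argument airtight under the equal-size neighboring convention — one must check that replacing (not inserting/deleting) a record changes $n^*$ by at most $1$, since in principle removing user $i$'s old value could zero out an $n_v$ while adding the new value could create a fresh one, for a net change of magnitude at most $1$ in each case and hence at most $1$ overall. Everything else is a direct application of the Laplace mechanism and of the obliviousness of sorting networks, both of which are available from the preliminaries.
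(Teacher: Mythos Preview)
Your proposal is correct and follows essentially the same approach as the paper, which in fact only sketches the argument in the paragraph preceding the theorem (oblivious sort, linear scan, sensitivity~$1$, Laplace noise, AKS for the $O(n\log n)$ bound). Your write-up is, if anything, more careful than the paper's: you make explicit why the access pattern is a deterministic function of~$n$, you spell out the Laplace tail bound for the accuracy claim, and you address the replacement-model sensitivity subtlety that the paper leaves implicit.
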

While above algorithm is optimal in terms of error, we propose  a more elegant  {\em streaming algorithm} that does the entire computation in the private memory.
The main idea is to use a {\em sketching technique} to maintain an approximate count of the distinct elements in the private memory and report this approximate count by adding noise from Lap$(1/\epsilon)$.
This will guarantee that our algorithm is $(\epsilon, 0)$-ODP, as the entire computation is done in the private memory and the Laplace mechanism is $(\epsilon, 0)$-DP. There are many streaming algorithms (e.g., Hyperloglog)~\cite{Flajolet, KaneNW10} which achieve $(1 \pm \alpha)$-approximation factor on the number of distinct elements with a space requirement of $\mathsf{polylog}(n)$.
We use the following (optimal) algorithm in \cite{KaneNW10}.
\begin{theorem}
There exists a streaming algorithm that gives a $(1 \pm \alpha)$ multiplicative approximation factor to the problem of finding the distinct elements in a data stream. The space requirement of the algorithm is at most $\frac{\log n}{\alpha^2} + (\log n)^2$ and the guarantee holds with probability $1-1/n$.
\end{theorem}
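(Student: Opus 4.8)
The plan is to obtain the claimed algorithm by taking the optimal constant-probability distinct-elements ($F_0$) sketch of Kane, Nelson and Woodruff~\cite{KaneNW10}, amplifying its success probability by median-of-independent-copies, and then accounting for the space. Write $F_0 := n^{*}$ for the number of distinct items, and recall that in our setting $m$ (hence $F_0$) is $\mathrm{poly}(n)$, so a single item fits in $O(\log n)$ bits. I would first recall the base result: a one-pass streaming algorithm $\mathcal{B}$ using $O(\alpha^{-2} + \log n)$ bits of state that outputs a value in $(1\pm\alpha)F_0$ with probability at least $2/3$. Since this is where the real content lies, I would sketch the mechanism behind $\mathcal{B}$: a hash function from a limited-independence family assigns each item a ``level'' (e.g.\ by the number of trailing zeros of its hash), a cheap \emph{rough estimator} using only $O(\log n)$ bits pins down $F_0$ up to a constant factor, this is used to pick a sub-sampling rate $p=\Theta(\alpha^{2})$ relative to $F_0$ so that only $\Theta(\alpha^{-2})$ distinct items survive, a bit-packed structure of $\Theta(\alpha^{-2})$ bits records the number $c$ of distinct survivors, and the estimate is $\widehat{F_0} = c/p$. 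Pairwise independence of the sub-sampling hash gives $\mathbb{E}[c] = pF_0$ and $\mathrm{Var}[c]\le \mathbb{E}[c]=\Theta(\alpha^{-2})$, so Chebyshev yields $|\widehat{F_0}-F_0|\le \alpha F_0$ with constant probability, while the rough estimator guarantees (with high probability) that the sub-sampling level is chosen so the counter neither grossly under-samples nor overflows.

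Next I would amplify: run $t=\Theta(\log n)$ independent copies $\mathcal{B}_1,\dots,\mathcal{B}_t$ of $\mathcal{B}$ on the same stream and output the median of their estimates. Each $\mathcal{B}_i$ independently lands in $(1\pm\alpha)F_0$ with probability $\ge 2/3$, so by a Chernoff bound the number of ``good'' copies exceeds $t/2$ except with probability $2^{-\Omega(t)}$; taking the hidden constant in $t=\Theta(\log n)$ large enough makes this at most $1/n$, and whenever a majority of copies are good the median is itself in $(1\pm\alpha)F_0$. The total space is $t\cdot O(\alpha^{-2}+\log n)=O\!\left(\tfrac{\log n}{\alpha^{2}}+(\log n)^{2}\right)$ bits, and since the hidden constants are absolute they can be folded into the statement, giving the claimed bound $\tfrac{\log n}{\alpha^{2}}+(\log n)^{2}$ with success probability $1-1/n$.

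The main obstacle is not the amplification, which is routine, but justifying the base sketch $\mathcal{B}$: one must show the rough estimator selects a sub-sampling level for which, simultaneously, enough distinct items survive that the $\Theta(\alpha^{-2})$-bit counter is representative and not so many that it overflows, and one must control the variance of the survivor count using only the limited independence that fits in the budget — this is precisely the technical heart of \cite{KaneNW10}. For our purposes $\mathcal{B}$ is invoked as a black box, so after recalling its guarantee the remaining work is the bookkeeping above: checking that a $\mathrm{poly}(n)$-bounded universe makes each copy $O(\log n)$ bits of state, and that the $\Theta(\log n)$-fold repetition drives the failure probability below $1/n$.
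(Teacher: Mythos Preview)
Your proposal is correct, but there is nothing to compare it against: the paper does not prove this theorem at all. It is stated purely as a citation of the optimal distinct-elements sketch of Kane, Nelson, and Woodruff~\cite{KaneNW10} (``We use the following (optimal) algorithm in~\cite{KaneNW10}''), and is then immediately used as a black box to build the ODP distinct-elements algorithm by adding Laplace noise. Your sketch of the base $O(\alpha^{-2}+\log n)$-bit constant-probability estimator together with median-of-$\Theta(\log n)$ amplification is the standard route to the stated bound and is faithful to what~\cite{KaneNW10} actually does, so it goes well beyond what the paper itself supplies.
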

It is easy to convert the above algorithm to an ODP-algorithm by adding noise sampled from Lap$(\frac{1}{\epsilon})$.

\begin{theorem}
There exists a single pass (or online) $(\epsilon, 0)$-ODP algorithm for the problem of finding the distinct elements in a database. The space requirement of the algorithm is at most $\frac{\log n}{\alpha^2} + (\log n)^2$. With probability at least $1-1/n-\theta$, the number of distinct elements output by our algorithm is $(1 \pm \alpha)n^* \pm \log(1/\theta) \frac{1}{\epsilon}$.
\end{theorem}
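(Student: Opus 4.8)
The plan is to realize the algorithm as the optimal streaming $F_0$-estimator $\mathcal{B}$ of~\cite{KaneNW10} (the preceding theorem) followed by the Laplace mechanism. Concretely: scan the $n$ records once, feeding each value $v_i$ to $\mathcal{B}$ so that all of $\mathcal{B}$'s state stays in the private memory, and at the end release $\hat{n} + \mathrm{Lap}(1/\epsilon)$, where $\hat{n}$ is the estimate returned by $\mathcal{B}$. The space bound is then immediate: apart from a single record at a time, the only state is the sketch of $\mathcal{B}$, of size $\frac{\log n}{\alpha^2} + (\log n)^2$, plus lower-order bookkeeping for the noise sample and the output.

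For the ODP guarantee I would separate the access-pattern component from the output component. The external access pattern of the algorithm is exactly the fixed sequence $(\getop(\indexa[1]), \dots, \getop(\indexa[n]))$ — one read per record and no data-dependent external writes — and this sequence is identical for every input. Hence the distribution over access patterns is a point mass on one pattern $s^*$, independent of the internally generated output, so conditioning on whether $s^* \in S$ collapses the ODP inequality to ordinary $(\epsilon,0)$-differential privacy of the released value $\hat{n} + \mathrm{Lap}(1/\epsilon)$. This last step is the crux. Changing one record changes the exact distinct count $n^*$ by at most one — a user can contribute at most one new value and delete at most one last copy of a value — so $n^* + \mathrm{Lap}(1/\epsilon)$ is $(\epsilon,0)$-DP by the Laplace mechanism; what must be checked is that this survives adding the noise to $\hat{n}$ rather than to $n^*$. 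I would establish this by arguing that, conditioned on $\mathcal{B}$'s internal randomness, $\hat{n}$ inherits the sensitivity-$1$ bound of $n^*$; if no clean Lipschitz bound on the estimator is available, I would instead fold $\mathcal{B}$'s $1/n$ failure probability (and any residual slack) into the negligible $\delta$ the paper already concedes for any TEE-resident mechanism. I expect this to be the main obstacle, since one cannot simply invoke closure of DP under post-processing: the bare sketch output $\hat{n}$ is not differentially private, so privacy has to come from a genuine sensitivity argument for the estimator together with the noise calibration.

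The accuracy bound is then a routine two-event union bound. By the preceding theorem, $\hat{n} \in (1 \pm \alpha) n^*$ except with probability at most $1/n$, and $\mathrm{Pr}[\,|\mathrm{Lap}(1/\epsilon)| > \log(1/\theta)/\epsilon\,] = \theta$; on the complement of both events the output equals $\hat{n} + \mathrm{Lap}(1/\epsilon) = (1 \pm \alpha) n^* \pm \log(1/\theta)\frac{1}{\epsilon}$, so the claimed guarantee holds with probability at least $1 - 1/n - \theta$. Since $\mathcal{B}$ is one-pass and exposes its estimate after every update, the resulting algorithm is single-pass and can be run online, which completes the argument.
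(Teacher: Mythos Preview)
Your approach is exactly the paper's: run the Kane--Nelson--Woodruff sketch entirely in private memory so that the only external accesses are the $n$ sequential reads, then add $\mathrm{Lap}(1/\epsilon)$ to the estimate. The paper's ``proof'' is a single sentence --- ``It is easy to convert the above algorithm to an ODP-algorithm by adding noise sampled from Lap$(1/\epsilon)$'' --- together with the earlier remark that ``the entire computation is done in the private memory and the Laplace mechanism is $(\epsilon,0)$-DP.'' Your decomposition into a data-independent access pattern plus a pure-DP output is the same reasoning, spelled out more carefully, and your accuracy union bound matches the stated guarantee exactly.

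You are right to flag the sensitivity issue, and in fact you are being more scrupulous than the paper: the paper silently treats the sensitivity of the released quantity as~$1$ (inherited from $n^*$), without ever arguing that the sketch estimate $\hat{n}$ has global sensitivity~$1$ under a one-record change. Your first proposed fix --- condition on the sketch's internal randomness and show the estimator is $1$-Lipschitz in the input multiset --- is the honest route and is what would actually be needed to justify the $(\epsilon,0)$ claim as stated. Your second fallback, absorbing slack into the paper's ``TEE-level'' $\delta$, does not rescue the theorem as written, since the statement is explicitly $(\epsilon,0)$-ODP; invoking a nonzero $\delta$ would change the claim. So the only real gap in your write-up is that you leave this crux unresolved --- but that is a gap the paper itself leaves open.
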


The additive error of $\pm \log(1/\theta) \frac{1}{\epsilon}$ is introduced by the Laplace mechanism, and the multiplicative error  of $(1 \pm \alpha)$ is introduced by the sketching scheme. Although this algorithm is not optimal in terms of the error compared to Theorem \ref{thm:distinct}, it has the advantage that it can maintain the approximate count in an {\em online} fashion.

\subsection{Histogram}\label{sec:hist}
Let $D$ be a database with $n$ records. We assume that each record in the database has a unique identifier. 
Let $\mathcal{D}$ denote all possible databases of size $n$.
Each record (or element) $r \in \mathcal{D}$ has a {\em type}, which, without loss of generality, is an integer in the set $\{1,2, \ldots, k\}$.

For a database $D \in \mathcal{D}$, let $n_i$ denote the number of elements of type $i$.
Then the histogram function $h: \mathcal{D} \rightarrow \mathbb{R}^{k} $ is defined as 
$h(D) := (n_1, n_2, \ldots, n_k)$.

A simple differentially private histogram algorithm $\mathcal{A}_\mathsf{hist}$ returns $h(D) + (X_1, X_2, \ldots, X_k)$
where~$X_i$ are i.i.d.~random variables drawn from Lap($2/\epsilon$).
This algorithm is {\em not} obliviously differentially private as the access pattern reveals to the adversary much more information about the data than the actual output.
In this section, we design an ODP algorithm for the histogram problem.
Let $\hat{n}_i$ denote the number of elements of type $i$ output by our histogram algorithm.
We prove the following theorem in this paper.

\begin{restatable}{theorem}{histthm}
\label{thm:histogram}
For any $\epsilon >0$,  there is an $(\epsilon, \frac{1}{n^2})$-ODP algorithm for the histogram problem
that runs in time $O(\tilde{n} \log \tilde{n}/\log\log \tilde{n})$ where $\tilde{n} = \max(n, k\log n /\epsilon)$. With probability $1-\theta$, it holds that 
$$
\max_{i} |\hat{n_i} - n_{i}| \leq \log (k/\theta) \cdot \frac{2}{\epsilon}.
$$ 
\end{restatable}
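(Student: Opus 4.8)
The plan is to realise a \emph{truncated geometric} (discrete–Laplace) mechanism on $h(D)$ by padding the database with dummy records of each type, count everything with one oblivious pass, and then argue that the whole transcript — the released histogram \emph{together with} the memory access pattern — is a post‑processing, whose internal randomness is independent of $D$, of that mechanism's output; closure of differential privacy under post‑processing then gives ODP. Concretely: fix a truncation bound $\Lambda=\Theta(\tfrac1\epsilon\log n)$. For each type $i\in\{1,\dots,k\}$ draw, in private memory, a two‑sided geometric variable $\bar Z_i$ truncated to $\{-\Lambda,\dots,\Lambda\}$ with per‑unit likelihood ratio $e^{\epsilon/2}$ (the discretisation of $\mathrm{Lap}(2/\epsilon)$), and let $d_i:=\Lambda+\bar Z_i\in\{0,\dots,2\Lambda\}$ be the number of type‑$i$ dummies. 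Allocate an external array $A$ of the \emph{fixed} length $\tilde n:=n+2k\Lambda=\Theta(\max(n,k\log n/\epsilon))$ and, using a single write pointer in private memory, fill it with the $n$ real records, then for each $i$ with $d_i$ copies of a $\dummy$ record tagged ``type $i$'', then with ``null'' $\dummy$ records for the remaining cells. Run $\oshuffle(A)$ instantiated by the Melbourne shuffle with $m=\mathrm{polylog}(n)=\omega(\log n)$ private memory, at cost $O(\tilde n\log\tilde n/\log m)=O(\tilde n\log\tilde n/\log\log\tilde n)$. Finally scan $A$ once, keeping counters $c_1,\dots,c_k$ (in private memory if $k\le m$, otherwise in external memory): on a cell of type $i$ do $c_i{+}{+}$, on a null cell issue a $\noop$ write; output $\hat n_i:=c_i-\Lambda$ for every $i$.

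For privacy, I would argue in two steps. First, $M(h(D)):=h(D)+(\bar Z_1,\dots,\bar Z_k)=(\hat n_1,\dots,\hat n_k)$ is $(\epsilon,1/n^2)$‑DP for the equal‑size neighbouring relation of Definition~\ref{def:ODP}: two such neighbours differ in at most two coordinates, each by at most one (this is exactly why the effective histogram sensitivity is $2$), so on the ``interior'' the likelihood ratio is bounded by $e^{\epsilon/2}\cdot e^{\epsilon/2}=e^{\epsilon}$, while on each of the (at most two) changed coordinates the truncation loses only $O(e^{-\epsilon\Lambda/2})$ of probability mass at the boundary; taking $\Lambda=\Theta(\tfrac1\epsilon\log n)$ large enough makes the resulting additive term at most $1/n^2$ — the standard accounting for the truncated/bounded geometric mechanism. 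Second, the remainder of the transcript is a $D$‑independent randomised function of $M(h(D))$: the build phase writes the fixed address sequence $A[1],\dots,A[\tilde n]$ (the $d_i$ affect only the encrypted contents, never the addresses); $\oshuffle$ is data‑oblivious by Definition~\ref{def:obl1}; in the counting scan the external addresses touched are exactly the type sequence of the uniformly permuted $A$, whose distribution \emph{given} $(\hat n_i)_i$ is a uniformly random arrangement of the multiset with multiplicities $c_i=\hat n_i+\Lambda$ and $\tilde n-\sum_i c_i$ nulls, i.e.\ a randomised function of $(\hat n_i)_i$ alone (up to the negligible failure probability of the shuffle); and writing $(\hat n_i)_i$ is again a fixed address sequence. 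Hence $\mathcal A(D)\stackrel{d}{=}\Psi\big(M(h(D)),R\big)$ for a fixed randomised map $\Psi$ with $R\perp D$, and post‑processing of $(\epsilon,\delta)$‑DP gives that $\mathcal A$ is $(\epsilon,1/n^2)$‑ODP.

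Utility and runtime are then immediate: $\hat n_i=n_i+\bar Z_i$ by construction, truncation only shrinks the geometric tails so $\Pr[|\bar Z_i|>t]\le e^{-\epsilon t/2}$, and a union bound over the $k$ types at $t=\tfrac2\epsilon\log(k/\theta)$ gives $\Pr[\max_i|\hat n_i-n_i|>\tfrac2\epsilon\log(k/\theta)]\le\theta$; the cost is dominated by the shuffle, $O(\tilde n\log\tilde n/\log\log\tilde n)$. I expect the delicate step to be the first half of the privacy argument: clamping the noise to $[-\Lambda,\Lambda]$ is precisely what makes the dummy counts bounded non‑negative integers and $\tilde n$ deterministic — and a deterministic access‑pattern length is something \emph{any} ODP algorithm needs — so one must check that this clamping only inflates the additive $\delta$, and by at most $1/n^2$, rather than breaking the multiplicative $e^{\epsilon}$ factor, and must pick $\Lambda$ accordingly; once $\Lambda$ is fixed the obliviousness is bookkeeping, since every address touched is either fixed in advance, produced by an oblivious primitive, or a function of the noisy counts that will be published anyway.
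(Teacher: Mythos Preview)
Your proposal is correct and follows essentially the same route as the paper: pad each type with a noised, bounded number of fake records, pad the total to a fixed length with ``null'' dummies, obliviously shuffle, scan once to count, and argue that the transcript is a data-independent randomised function of the noisy histogram (the paper phrases this last step via a layout/configuration bijection rather than an explicit post-processing appeal, and uses an all-or-nothing Laplace truncation plus round-robin fake writes where you use per-coordinate truncated geometrics and a $\noop$, but these are cosmetic). The accuracy and running-time arguments match as well.
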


Observe that our algorithm achieves same error guarantee as that of global DP algorithm without much overhead in terms of running time.

To prove that our algorithm is ODP, we need to show that 
the distribution on the access patterns produced by our algorithm for any two neighboring databases is approximately the same.
The same should hold true for the histogram output by our algorithm.
We achieve this as follows. 
We want to use the simple histogram algorithm that adds Laplace noise with parameter~$2/\epsilon$, which we know is $\epsilon$-DP. 
This is true since the histogram queries have sensitivity of~2; that is, if we change the record associated with the single user, then the histogram output changes for at most 2 types.
Note that if the private memory size is larger than~$k$, then the task is trivial.
One can build the entire DP-histogram in the private memory by making a single pass over the database, which clearly satisfies our definition of ODP.
However, in many applications $k \gg O(\log n)$. 
A common example is a histogram on {bigrams} of words which is commonly used in text prediction.
When $k \gg \log n$ the private memory is not enough to store the entire histogram, and we need to make sure that memory access patterns do not reveal too much information to the adversary.
\ifFull
Indeed, it is instructive to convince oneself that a naive implementation of the differentially private histogram algorithm in fact completely reveals the histogram to an adversary who sees the memory accesses.
\fi

One can make the naive histogram algorithm
satisfy Definition~\ref{def:ODP} by accessing the entire public memory for every read/write operation,
incurring an overhead of $O(nk)$.
\ifFull
{For~$k = \mathsf{polylog}(n)$, one can access the histogram array through an oblivious RAM. (See \ifFull appendix \else the supplementary material \fi for more details.)
Oblivious sorting algorithms could also be used to solve the histogram problem (see~\Section\ref{sec:heavyhitters} and \Section\ref{sec:otherhist}).
However, sorting is usually an expensive operation in practice.
\else
Another method to solve the histogram problem would be to rely on oblivious sorting algorithms. The overhead of this
method would be the overhead of sorting.
However, sorting is usually an expensive operation in practice.
\fi
Here, we give an arguably simpler and faster algorithm for larger values of $k$ that satisfies the definition of ODP.
{(At a high level our algorithm is similar to the one which appeared in the independent work by~Mazloom and Gordon~\cite{cryptoeprint:2017:1016},
who use a differentially private histogram to protect access patterns of graph-parallel computation based on garbled circuits,
as a result requiring a different noise distribution and shuffling algorithm.)}

\ifFull
We first  give a high-level overview of the algorithm. The pseudo-code is given in~Algorithm~\ref{alg:odphist}. 
\else
We give a high-level overview of the algorithm here, and defer the pseudo-code and analysis to the supplementary material.
\fi
Let~$T = n + 20 k \log n/\epsilon$.
\newcommand{\histogramTextAlg}{
\begin{enumerate}[itemsep=0mm]
\item Sample $k$ random variables $X_1, X_2, \ldots, X_{k}$ from Lap$(2/\epsilon)$. If any $|X_i| > 10 \log n/\epsilon$,
then we set $X_i = 0$ for all $i = 1,2,..,k$. For all $i$, set $X_i = \lceil X_i \rceil$.
\item We create $(10\log n/\epsilon + X_i)$ {\em fake} records of type $i$ and append it to the database $D$. This step together with step 1 ensures that $(10\log n/\epsilon + X_i)$ is always positive. 
The main reason to restrict the Laplace distribution's support to $[-10 \log n/\epsilon, 10 \log n/\epsilon]$ is to ensure that we only have positive noise. If the noise is negative, we cannot create fake records in the database {simulating this noise}.
\item Next, we create $(10k \log n/\epsilon - \sum_{i} X_i)$ {\em dummy} records in the database $D$, which do not correspond to any particular type in $1..k$. 
The dummy records have type $k+1$. 
The purpose of this step is to ensure that the length of the output is exactly $T$.
\item Let $\hat{D}$ be the augmented database that contains both dummy and fake records,
where the adversary cannot distinguish between database, dummy and fake records as they are encrypted using probabilistic encryption.
Obliviously shuffle $\hat{D}$ ~\cite{melbshuffle}~ so that the mapping of records to array $a[1,2,...,T]$ is uniformly distributed.
\item Initialise $b$ with $k$ zero counters in external memory.  Scan every element from the array $a[1,2,...,T]$ and increment the counter in histogram~$b$ associated with type of $a[i]$.
If the record corresponds to a dummy element, then access the array $b[1,2,...,k]$ in {\em round-robin fashion}
and do a fake write
without modifying the actual content of $b$.
\end{enumerate}
}
\histogramTextAlg

\ifFull
\begin{figure}[t]
\begin{minipage}[t]{0.45\textwidth}
\begin{algorithm}[H]
\caption[]{Oblivious Differentially Private Histogram $\mathcal{A}^{\text{ODP}}_\mathsf{hist}(D, k)$\footnotemark}   \label{alg:odphist}
\begin{algorithmic}
\STATE $\ext{\hat{D}} \leftarrow \mathsf{add\_fake\_dummy}(\ext{D},k)${\color{white}$^\dagger$}
\STATE {$\ext{D'} \leftarrow \oshuffle(\ext{\hat{D}})$}
\STATE $\ext{b} \gets \{0\}^k${\color{white}$^\dagger$}
\STATE $\ptr \leftarrow 1${\color{white}$^\dagger$}
\FOR {$r \in \ext{D'}$}
	\STATE $i \gets \mathsf{get\_type}(r)$
	\IF {$i = k+1$} 
	\STATE  $\ext{b}_\ptr \gets \ext{b }_\ptr + 0$ {\color{white}$^\dagger$}
	\STATE $\ptr \leftarrow \ptr \mod k +1${\color{white}$^\dagger$}{\color{white}$^\dagger$}
	\ELSE
	\STATE $\ext{b}_i \gets \ext{b}_i + 1$ {\color{white}$^\dagger$}
	\ENDIF
\ENDFOR
\FOR {$i \in 1\ldots k$}
	\STATE {{ $\ext{\dfp{b}}_i \gets \ext{b}_i - 10 \log n/\epsilon$}}{\color{white}$^\dagger$}
\ENDFOR
\STATE \return $\ext{\dfp{b}}${\color{white}$^\dagger$}
\end{algorithmic}
\end{algorithm}
\end{minipage}
\hspace{1cm}
\begin{minipage}[t]{.45\textwidth}
\begin{algorithm}[H]
\caption[]{Utility procedure for Algorithm~\ref{alg:odphist}}
\begin{algorithmic}
\STATE \textbf{procedure} $\mathsf{add\_fake\_dummy}(\ext{D},k)$
\bindent
%\STATE $\ext{\hat{D}} \leftarrow \ext{D}$
\FOR {$i \in 1\ldots k$}
	\STATE $X_i \leftarrow \textit{draw a sample from}~\text{Lap}(\frac{2}{\epsilon})$ 
\ENDFOR
\STATE \textit{If $\exists i$} $|X_i| > 10 \log n/\epsilon$: $\forall i X_i \leftarrow 0$
\STATE $\forall i, X_i \leftarrow  \lceil X_i \rceil$
\FOR {$i \in 1\ldots k$} 
	\FOR {$j \in 1\ldots (10 \log n/\epsilon + X_i $)}
		\STATE $r' \leftarrow \mathsf{set\_type}(\dummy, i)$
		\STATE $\ext{{D}}.\mathsf{append}(r')$
	\ENDFOR
\ENDFOR
\STATE $L \leftarrow 10k \cdot \log n/\epsilon - \sum^{k}_{i =1} X_i$
\FOR {$i \in 1\ldots L$}
		\STATE $r' \leftarrow \mathsf{set\_type}(\dummy, k+1)$
		\STATE $\ext{{D}}.\mathsf{append}(r')$
\ENDFOR
\STATE \return $\ext{{D}}$
\eindent

\end{algorithmic}
\end{algorithm}
\end{minipage}
\end{figure}
\footnotetext{Data stored in external memory is highlighted in \ext{grey}. Recall that it is encrypted.}

We shall show that above algorithm is {$(\epsilon, \frac{1}{n^2})$}-differentially private for any $\epsilon > 0$. 
Towards that we need the following simple lemma for our proofs.

\begin{lemma}
\label{lem:cut}
$\Pr \left[\max_{1\le i\le k} |X_i| \geq {\frac{10 \log n}{\epsilon}} \right] \leq \frac{1}{n^2}$ where $X_i$ is drawn from~Lap$(\frac{2}{\epsilon})$.
\end{lemma}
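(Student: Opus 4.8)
The plan is to prove Lemma~\ref{lem:cut} by a union bound over the $k$ coordinates, using the exponential tail of the Laplace distribution on each coordinate. First I would recall that if $X\sim\text{Lap}(b)$, i.e.\ $X$ has density $\tfrac{1}{2b}e^{-|x|/b}$, then integrating over $(-\infty,-t]\cup[t,\infty)$ gives $\Pr[|X|\ge t]=e^{-t/b}$ for every $t\ge 0$. Applying this with $b=2/\epsilon$ and $t=\tfrac{10\log n}{\epsilon}$ yields, for each fixed $i$,
\[
\Pr\!\left[|X_i|\ge \tfrac{10\log n}{\epsilon}\right]=\exp\!\left(-\tfrac{10\log n/\epsilon}{2/\epsilon}\right)=e^{-5\log n}=n^{-5},
\]
interpreting $\log$ as the natural logarithm (the bound is only smaller if $\log=\log_2$, as in the rest of the paper it makes no difference).

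The second and final step is a union bound over $i=1,\dots,k$:
\[
\Pr\!\left[\max_{1\le i\le k}|X_i|\ge \tfrac{10\log n}{\epsilon}\right]\le \sum_{i=1}^{k}\Pr\!\left[|X_i|\ge \tfrac{10\log n}{\epsilon}\right]\le k\cdot n^{-5}.
\]
Under the standing assumption that the number of types $k$ is polynomially bounded in $n$ (the displayed bound already goes through whenever $k\le n^{3}$, which comfortably covers the applications of interest), the right-hand side is at most $n^{-5}\cdot n^{3}=n^{-2}$, giving the claim.

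I do not expect a genuine obstacle here: the statement is a one-line Laplace tail estimate plus a union bound, and the constant $10$ in the truncation threshold was evidently chosen so that each coordinate contributes $n^{-5}$, leaving enough slack to absorb the factor $k$ and still land at $n^{-2}$. The only point worth flagging is exactly this dependence on $k$: if one wanted the lemma (and hence the $\delta=1/n^2$ guarantee in Theorem~\ref{thm:histogram}) to survive $k$ super-polynomial in $n$, the threshold $\tfrac{10\log n}{\epsilon}$ should be replaced by $\Theta\!\big(\tfrac{\log(kn)}{\epsilon}\big)$, with only cosmetic changes to the downstream error and running-time bounds.
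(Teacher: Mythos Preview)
Your proof is correct and matches the paper's own argument essentially line for line: the paper also invokes the Laplace tail bound $\Pr[|Y|\ge t\cdot b]\le e^{-t}$ to get $n^{-5}$ per coordinate, then applies a union bound over $k$. The only difference is that the paper explicitly uses the standing assumption $k\le n$ (rather than your $k\le n^{3}$) to pass from $k\cdot n^{-5}$ to $n^{-2}$.
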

\begin{proof}
If $Y \sim \text{Lap}(b)$, then we know that $\Pr \left[ |Y| \geq  t \cdot b \right] \leq e^{-t}.$ Therefore, 
$$\Pr \left[\max_{i} |X_i| \geq \frac{10 \log n}{\epsilon} \right] \leq \sum^k_{i = 1}\Pr \left[ |X_i| \geq \frac{10 \log n}{\epsilon} \right] \leq k \cdot 1/n^{5} \leq 1/n^2$$
where the last inequality follows from the fact that $k \leq n$.
\end{proof}

For rest of the proof, we will assume that $|X_i| \leq \frac{10 \log n}{\epsilon}$ for all $i$. 
If any $|X_i| > \frac{10 \log n}{\epsilon}$, then we will not concern ourselves in bounding the privacy loss
as it will be absorbed by the $\delta$ parameter.
Rounding of $X_i$ to a specific integer value can be seen as a post-processing step, which DP output is immune to. Hence, going forward, we will ignore this minor point.

We prove that our algorithm satisfies Definition \ref{def:ODP} in two steps: In the first step we assume that adversary does not see the access pattern, and show that output of the algorithm is {$(\epsilon, \frac{1}{n^2})$}-differentially private.

\begin{lemma}
\label{lem:laplace}
Our algorithm is $(\epsilon, \frac{1}{n^2})$-differentially private with respect to the histogram output.
\end{lemma}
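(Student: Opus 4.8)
The plan is to strip away the parts of the algorithm that are irrelevant to the \emph{output} distribution — the oblivious shuffle, the dummy round-robin writes, and the integer rounding — so that the lemma reduces to a statement about an explicit \emph{truncated} Laplace mechanism, and then to compare that mechanism with the ordinary Laplace-histogram mechanism $\mathcal A_\mathsf{hist}$ via a coupling, charging the (tiny) truncation probability to $\delta$.

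First I would identify the effective output. After the scan, counter $b_i$ holds $n_i + (10\log n/\epsilon + X_i)$ — the real type-$i$ records plus the $10\log n/\epsilon + X_i$ fake ones of type $i$ — so $\hat b_i = b_i - 10\log n/\epsilon = n_i + X_i$, i.e.\ $\hat b = h(D) + X$, where $X=(X_1,\dots,X_k)$ is the Laplace vector, replaced by the all-zero vector on the event $\bar G := \{\max_i |X_i| > B\}$ with $B := 10\log n/\epsilon$. As already noted in the text, the integer rounding of $X$ (and the fixed, data-independent additive offsets hidden in ``$10\log n/\epsilon$'') amount to a deterministic post-processing of $h(D)+X$, which differential privacy tolerates; the shuffle and the round-robin fake writes touch only the access pattern. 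So it suffices to prove the claim for the mechanism $\mathcal A$ that draws $Y_1,\dots,Y_k$ i.i.d.\ from Lap$(2/\epsilon)$ and outputs $h(D)+Y$ on the event $G:=\{\max_i|Y_i|\le B\}$ and $h(D)$ otherwise. I would also record that for same-size neighbours $D_1,D_2$ the single differing record changes at most two type-counts, each by $\pm 1$, so $\|h(D_1)-h(D_2)\|_1 \le 2$; hence the histogram query has $\ell_1$-sensitivity $2$ and the ordinary mechanism $\mathcal A_\mathsf{hist}(D) := h(D) + Y$ (same $Y$, no truncation) is $(\epsilon,0)$-DP.

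Now couple $\mathcal A$ with $\mathcal A_\mathsf{hist}$ through the shared noise $Y$: the two agree on $G$ and differ only on $\bar G$, whose probability is at most $k/n^5$ by the computation in the proof of Lemma~\ref{lem:cut}. Fix neighbours $D_1, D_2$ and an output set $O$. Since $\mathcal A(D_1) = \mathcal A_\mathsf{hist}(D_1)$ on $G$,
\[
\Pr[\mathcal A(D_1)\in O] \;\le\; \Pr[\mathcal A_\mathsf{hist}(D_1)\in O] + \Pr[\bar G] \;\le\; e^\epsilon\,\Pr[\mathcal A_\mathsf{hist}(D_2)\in O] + \tfrac{k}{n^5}.
\]
Going the other way, $\Pr[\mathcal A_\mathsf{hist}(D_2)\in O] = \Pr[\mathcal A_\mathsf{hist}(D_2)\in O, G] + \Pr[\mathcal A_\mathsf{hist}(D_2)\in O,\bar G] \le \Pr[\mathcal A(D_2)\in O] + \tfrac{k}{n^5}$, again because the two mechanisms coincide on $G$. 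Combining the two displays yields $\Pr[\mathcal A(D_1)\in O] \le e^\epsilon \Pr[\mathcal A(D_2)\in O] + (e^\epsilon + 1)\tfrac{k}{n^5}$, and since $k \le n$ and $\epsilon = O(\log n)$ in the regime of interest, $(e^\epsilon+1)k/n^5 \le 1/n^2$, which is the claimed bound.

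The steps are essentially bookkeeping; the one genuine subtlety — and the reason the lemma carries $\delta>0$ at all rather than $\delta=0$ — is that truncating the noise to $[-B,B]^k$ makes a thin shell of outputs reachable under $D_1$ unreachable under $D_2$ (the true histograms $h(D_1)\neq h(D_2)$ shift the two truncation boxes relative to each other in output space), which is precisely what forces the extra $(e^\epsilon+1)\Pr[\bar G]$ loss. The thing to watch is that the truncation slack is spent twice — once comparing $\mathcal A$ and $\mathcal A_\mathsf{hist}$ on $D_1$, once on $D_2$ — and that the choice $B=10\log n/\epsilon$ leaves room for this (it does, by several powers of $n$). An alternative, more hands-on route avoids $\mathcal A_\mathsf{hist}$ entirely: write out the densities and split the output space into the atom at $h(D_1)$ (mass $\le \Pr[\bar G]$), a ``bulk'' where both $o-h(D_1)$ and $o-h(D_2)$ lie in the box and the textbook Laplace ratio $e^{\frac{\epsilon}{2}\|h(D_1)-h(D_2)\|_1}=e^\epsilon$ applies pointwise, and a ``boundary'' sliver of probability at most $2\Pr[|Y| > B-1] = 2n^{-5}e^{\epsilon/2}$; summing these gives the same conclusion.
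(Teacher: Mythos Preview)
Your proof is correct and follows the same approach as the paper's: identify the output as the Laplace-histogram mechanism modified on the low-probability truncation event, invoke the $(\epsilon,0)$-DP of the untruncated mechanism (sensitivity~2), and absorb the truncation event into~$\delta$. Your write-up is considerably more careful than the paper's two-sentence sketch---you explicitly strip away the shuffle/round-robin scaffolding, spell out the coupling, and track the $(e^\epsilon+1)$ blow-up on $\Pr[\bar G]$ that the paper glosses over; the only wrinkle is that your final step $(e^\epsilon+1)k/n^5 \le 1/n^2$ needs $\epsilon \le 2\ln n$ or so, a constraint the paper's sketch also implicitly carries but does not state.
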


\begin{proof}
We sketch the proof for completeness; see~\cite{privbook} for full proof of the lemma.
If the sensitivity of a query is $\Delta$, then we know that the Laplace mechanism, which adds a noise drawn from the distribution Lap$(\frac{\Delta}{\epsilon})$, is $(\epsilon,0)$ differentially private.
Since the histogram query has sensitivity of 2, Lap$(\frac{2}{\epsilon})$ is $(\epsilon,0)$.
However, our algorithm outputs the actual histogram without any noise if $\max_{i} |X_i| > \frac{10 \log n}{\epsilon}$, which we argued in Lemma \ref{lem:cut} happens with probability at most $\frac{1}{n^2}$.
Therefore, our algorithm is {$(\epsilon, \frac{1}{n^2})$}-differentially private.
\end{proof}

In the second step we show that memory access patterns of our algorithm satisfies oblivious differential privacy. 
That is, given any memory access pattern $s$ produced by our algorithm and two neighboring databases $D_1$ and $D_2$ we need to show
$$
\text{Pr}\left[\mathcal{A}(D_1) \rightarrow s \right] \leq e^{\epsilon} \cdot {\text{Pr} \left[\mathcal{A}(D_2) \rightarrow s \right]} + \delta 
$$
To prove this claim we need to set up some notation. 
Recall that $\hat{D}$ denotes the augmented database with dummy records and fake records.
We define {\em layout} as the actual mapping of the database $\hat{D}$ to the array $a[1], a[2], \ldots, a[T]$. 
We denote the set of all layouts by $\mathcal{L}$; clearly, $|\mathcal{L}| = T!$, where we assume that dummy/fake records also have unique identifiers.
Associated with every layout $\ell$ is a {\em configuration} $c(\ell)$. 
A configuration $c(\ell)$ is a $T$-dimensional  vector from the set $\{1,2, \ldots, k+1 \}^{T}$.
The $j$th coordinate of $c(\ell)$, denoted by $c_j(\ell)$, simply denotes a type $ i \in [k+1]$. 
Recall that dummy records are of type $k+1$.
We extend the histogram function to the augmented database in a natural fashion: $h(\hat{D}) := (n_1, n_2, \ldots, n_{k+1})$, where $n_i$ denotes the number of records of type $i$ in $h(\hat{D})$.
The shuffle operation guarantees that mapping of the database $\hat{D}$ to the array $a[1,2,\ldots, T]$ is uniformly distributed; that is, it picks a layout $\ell \in \mathcal{L}$ uniformly at random.

Given a layout $\ell \in \mathcal{L}$, the memory access pattern produced by our algorithm is completely deterministic. 
Furthermore, observe that any two layouts $\ell_1, \ell_2 \in \mathcal{L}$, which have the same configuration $c$, lead to same access pattern.
Both these statements follow directly from the definition of our algorithm, and the fact that public accesses to the array $b[1,2,\ldots, k]$ depend only on the type of the records. 

Thus, we get the following simple observation.
\begin{proposition}
\label{prep:1-2-1}
The mapping $q: \mathcal{C} \rightarrow  \mathcal{S}$ is a one-to-one mapping between the set  of all configurations $\mathcal{C}$ and the set of all access patterns produced by our algorithm $\mathcal{S}$. 
\end{proposition}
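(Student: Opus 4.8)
The plan is to prove the two directions of "one-to-one" separately: that $q$ is well-defined (each configuration determines a unique access pattern) and that $q$ is injective (distinct configurations give distinct access patterns). The first direction is essentially already asserted in the paragraph preceding the proposition, so I would formalize it by induction on the scan in Step 5 of the algorithm. Concretely, I would argue that the access pattern is the concatenation of the fixed accesses made during $\oshuffle$ and initialization (which depend only on $T$ and $k$, hence only on the fact that $\hat D$ has $T$ records of $k+1$ possible types, i.e.\ only on $c$ up to the quantities already fixed) followed by, for each position $j = 1,\dots,T$, the block of accesses triggered by reading $a[j]$: a read of $a[j]$, then either an increment access to $b[c_j]$ if $c_j \le k$, or a round-robin fake write to $b[\ptr]$ if $c_j = k+1$, where $\ptr$ is itself a deterministic function of how many dummy records were seen among $a[1],\dots,a[j-1]$. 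Since the $\ptr$ value and the target index $b[c_j]$ (or $b[\ptr]$) are determined entirely by $c_1,\dots,c_j$, the whole access pattern is a deterministic function of $c = c(\ell)$; in particular any two layouts with the same configuration produce the same pattern, so $q$ is well-defined on $\mathcal{C}$.

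For injectivity, I would show that $c$ can be recovered from $s = q(c)$. The key point is that the fixed prefix of accesses (shuffle + initialization) has a known, configuration-independent length, so from $s$ one can locate the start of the scan phase and then parse $s$ into the $T$ consecutive blocks described above. For block $j$ one inspects which index of $b$ is accessed: if it is index $i \le k$ and this is a genuine increment, then $c_j = i$; the only ambiguity is that a round-robin fake write also touches some index of $b$. But the round-robin pointer $\ptr$ advances deterministically and only on dummy records, so as we parse the blocks left to right we maintain the expected $\ptr$ value; block $j$ corresponds to a dummy record (i.e.\ $c_j = k+1$) precisely when the $b$-index accessed in block $j$ equals the current expected $\ptr$ value, in which case we advance $\ptr$, and otherwise $c_j$ equals the accessed index. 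This reconstruction procedure is deterministic and yields a unique $c$, so $q$ is injective. Combined with well-definedness this gives that $q$ is a one-to-one mapping onto its image $\mathcal{S}$, which is exactly the claim.

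The one genuine subtlety — and the step I expect to need the most care — is the potential collision between a genuine increment of $b[i]$ and a fake round-robin write that happens to land on the same index $b[i]$ at the same scan position. One must check that the reconstruction above is never fooled: this works because the round-robin pointer is a purely deterministic counter of dummy records seen so far, so at each position there is exactly one index that a fake write would target, and the parser knows it. I would state this as the crux of the injectivity argument. A second, more cosmetic point to nail down is that the accesses during $\oshuffle$ and the zero-initialization of $b$ genuinely do not depend on $c$ (only on $T$ and $k$), so that the prefix of $s$ carries no configuration information and the scan phase can be cleanly delimited; this follows from the obliviousness of $\oshuffle$ (Definition \ref{def:obl1}) applied to the shuffled array, and from the fact that $b \gets \{0\}^k$ touches every counter in a fixed order.
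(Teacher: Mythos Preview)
Your well-definedness argument is fine and is essentially a careful unpacking of what the paper asserts in the sentence preceding the proposition (``public accesses to the array $b[1,\ldots,k]$ depend only on the type of the records''); the paper offers nothing more than that sentence, so on this half you are simply more explicit than the original.

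The injectivity half, however, has a genuine gap. Your reconstruction rule declares ``$c_j = k+1$ precisely when the $b$-index accessed in block $j$ equals the current expected $\ptr$ value.'' But a \emph{real} record of type $i$ appearing at a position where the round-robin pointer happens to sit at $i$ produces exactly the same access to $b[i]$ as a dummy would, and your parser then misclassifies it as a dummy and advances $\ptr$. Concretely, take $k=2$, $T=3$, and compare $c=(1,\,k{+}1,\,2)$ with $c'=(k{+}1,\,1,\,2)$: in both cases the $b$-accesses are $b[1],b[1],b[2]$ (and the $a$-scan is identical), so $q(c)=q(c')$ even though $c\neq c'$. Your parser, applied to this access pattern, actually outputs a third configuration $(k{+}1,\,1,\,k{+}1)$, which also maps to the same pattern. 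So the rule is not an inverse, and in fact $q$ is not injective on all of $\{1,\dots,k+1\}^T$.

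Two remarks to orient you. First, the paper does not actually prove injectivity either; it simply states the proposition as an ``observation,'' so you are not missing a trick that the authors supplied. Second, inspect how the proposition is used downstream (Lemma~\ref{lem:config} and the proof of Theorem~\ref{thm:histogram}): what is really needed is that the access pattern is a deterministic function of the configuration, so that privacy of the configuration implies privacy of the access pattern by post-processing. That is exactly your well-definedness direction, and it suffices for the theorem; the appearance of $q^{-1}$ in the paper can be read as ``preimage set'' rather than ``inverse function'' without harming the argument. So the fix is not to repair injectivity but to note that only the forward direction is required.
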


Going forward, we will concern ourselves only with the distribution produced by our algorithm on $\mathcal{C}$ rather than the set $\mathcal{S}$.
We will argue that for any two neighboring databases, the probability mass our algorithm puts on a given configuration $c \in \mathcal{C}$ satisfies the definition of DP.
The configuration produced by our algorithm depends only on two factors: a) The histogram $h(\hat{D})$ of the augmented database $\hat{D}$ b) Output of the shuffle operation.
Up to permutations, a configuration $c$ output by our algorithm is completely determined by the histogram $h(\hat{D})$ produced by our algorithm, which is a random variable.
Let $g:\mathbb{R}^{k+1} \rightarrow \{1,2,...,k+1\}^{T}$ denote the mapping from all possible histograms on augmented databases to all possible configurations of length $T$. 
However, given a layout of the database $\hat{D}$, the shuffle operation produces a random permutation of the records of the database.
This immediately implies the following lemma.

\begin{lemma}
\label{lem:config}
Fix a configuration $c \in \{1,2,...,k+1 \}^T$. Then,
$$
\frac{\mathrm{Pr}\left[{\mathcal{A}(D_1) \in c}\right]}{\mathrm{Pr}\left[{\mathcal{A}(D_2) \in c}\right]} = 
\frac{\mathrm{Pr}\left[{\mathcal{A}(D_1) \in g^{-1}(c)}\right]}{\mathrm{Pr}\left[{\mathcal{A}(D_2) \in g^{-1}(c)}\right]}
$$
\end{lemma}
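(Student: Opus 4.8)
The plan is to \emph{condition on the histogram of the augmented database} and observe that, given that histogram, the distribution over configurations is the same no matter which of the two neighboring databases we started from. Concretely, the only sources of randomness in the algorithm are (i) the Laplace draws $X_1,\dots,X_k$ together with the clamping rule, which, for a fixed input $D_j$, determine the augmented database $\hat{D}_j$ and in particular its histogram $h(\hat{D}_j)=(n_1,\dots,n_{k+1})$ with $\sum_i n_i = T$; and (ii) the layout $\ell\in\mathcal{L}$ chosen by $\oshuffle$, which by the guarantee of the data-oblivious shuffle of~\cite{melbshuffle} is uniform on $\mathcal{L}$ and independent of everything else (in particular independent of the noise and of the identities of the records). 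By Proposition~\ref{prep:1-2-1} it suffices to track the configuration $c(\ell)$ rather than the access pattern itself, and by the discussion preceding the lemma the configuration depends on $D_j$ only through $h(\hat{D}_j)$ and $\ell$.

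The key step is the claim that, for every fixed type-count vector $\vec{n}=(n_1,\dots,n_{k+1})$ with $\sum_i n_i=T$, the conditional probability $\Pr[\,c(\ell)=c \mid h(\hat{D}_j)=\vec{n}\,]$ does not depend on $j$. Indeed, conditioned on $h(\hat{D}_j)=\vec{n}$, the shuffle picks $\ell$ uniformly among the $T!$ layouts; the resulting configuration equals a target $c\in\{1,\dots,k+1\}^T$ if and only if, for each type $i$, $\ell$ maps the $n_i$ records of type $i$ onto the positions $\{t: c_t=i\}$. This requires that the type-count vector of $c$ equals $\vec{n}$ (write $\vec{n}_c$ for that vector, which is exactly what $g^{-1}(c)$ denotes — the unique histogram associated with the configuration $c$), and in that case there are exactly $\prod_{i=1}^{k+1} n_i!$ such layouts. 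Hence $\Pr[\,c(\ell)=c \mid h(\hat{D}_j)=\vec{n}\,]$ equals $\bigl(\prod_i n_i!\bigr)/T!$ when $\vec{n}=g^{-1}(c)$ and $0$ otherwise — a purely combinatorial quantity, the same for $j=1$ and $j=2$.

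To conclude, I would expand by total probability, $\Pr[\mathcal{A}(D_j)\to c]=\sum_{\vec{n}}\Pr[h(\hat{D}_j)=\vec{n}]\cdot\Pr[\,c(\ell)=c\mid h(\hat{D}_j)=\vec{n}\,]$. By the previous step only the term $\vec{n}=g^{-1}(c)$ survives, so $\Pr[\mathcal{A}(D_j)\to c]=\Pr[h(\hat{D}_j)=g^{-1}(c)]\cdot\bigl(\prod_i (g^{-1}(c))_i!\bigr)/T!$, i.e. $\Pr[\mathcal{A}(D_j)\in c]=\Pr[\mathcal{A}(D_j)\in g^{-1}(c)]\cdot\gamma_c$ for a constant $\gamma_c>0$ independent of $j$. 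Dividing the $j=1$ expression by the $j=2$ expression cancels $\gamma_c$ and yields exactly the claimed identity (with the usual convention that the equality is read as holding when the relevant denominators are nonzero).

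The main obstacle is not any hard estimate but getting the bookkeeping around $g$ right: $g$ collapses a configuration to its type-counts, so the event ``$\mathcal{A}(D_j)\in g^{-1}(c)$'' must be interpreted as ``$h(\hat{D}_j)=g^{-1}(c)$,'' and one has to be sure that the count of layouts realizing $c$ — namely $\prod_i n_i!$ — genuinely depends only on $\vec{n}_c$ and not on which records (those of $D_1$ or of $D_2$) are being permuted, which it does because layouts are arrangements of distinctly-labelled records and conditioning on $h(\hat D_j)=\vec n$ fixes the multiset of types. One should also note that Lemma~\ref{lem:cut} is \emph{not} needed here: the clamping of the $X_i$ only affects the law of $h(\hat{D}_j)$, which is precisely the quantity appearing on the right-hand side, so the identity is insensitive to it; Lemma~\ref{lem:cut} will only be invoked afterwards, when the right-hand ratio is bounded to finish the privacy proof.
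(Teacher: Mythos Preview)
Your proposal is correct and follows the same approach the paper intends: the paper does not give a formal proof of this lemma, stating only that it ``immediately follows'' from the fact that the oblivious shuffle picks a layout uniformly at random, and your conditioning-on-$h(\hat D_j)$ argument with the combinatorial count $\prod_i n_i!/T!$ is precisely the detailed verification of that one-line claim. Your reading of $g^{-1}(c)$ as the type-count vector of $c$ is the right interpretation of the paper's (loose) notation, and your remark that Lemma~\ref{lem:cut} is irrelevant here is also correct.
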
 

The lemma implies that to show that access patterns produced by our algorithm is $(\epsilon, \delta)$-ODP it is enough to show that distribution on the set of all histograms $\mathbb{R}^{k+1}$ satisfies  $(\epsilon, \delta)$-DP.
However, the number of dummy elements $n_{k+1}$ in any histogram $h \in \mathbb{R}^{k+1}$ output by our algorithm is completely determined by the random variables $X_1, X_2,...,X_k$. 
Hence it is enough to show that distribution on the set of all histograms on the first $k$ types satisfies  $(\epsilon, \delta)$-DP, which we already argued in Lemma is $(\epsilon, \delta)$-DP.

Now we have all the components to prove the main Theorem \ref{thm:histogram}.

\begin{proof}[Proof of~Theorem~\ref{thm:histogram}]
For any histogram $h \in \mathbb{R}^{k+1}$, let $\mathsf{truncated}(h)$ denote the histogram restricted to the first $k$ elements.
Consider any neighboring databases $D$ and $D'$ and fix a  memory access pattern $s$ produced by our algorithm. 
From Lemma \ref{lem:config} and Proposition \ref{prep:1-2-1} we have that
$$
\frac{\mathrm{Pr}\left[\mathcal{A}(D) \rightarrow s \right]}{\mathrm{Pr} \left[\mathcal{A}(D') \rightarrow s \right]} = \frac{\mathrm{Pr} \left[\mathcal{A}(D) \rightarrow g^{-1}(q^{-1}(s)) \right]}
{\mathrm{Pr}\left[\mathcal{A}(D') \rightarrow g^{-1}(q^{-1}(s)) \right]}.
$$

Since the number of dummy records is completely determined by the random variables $X_1, X_2, ..., X_k$, we have 

$$
\frac{\mathrm{Pr} \left[\mathcal{A}(D) \rightarrow g^{-1}(q^{-1}(s)) \right]}
{\mathrm{Pr}\left[\mathcal{A}(D') \rightarrow g^{-1}(q^{-1}(s)) \right]} = 
\frac{\mathrm{Pr} \left[\mathcal{A}(D) \rightarrow \mathsf{truncated}(g^{-1}(q^{-1}(s))) \right]}
{\mathrm{Pr}\left[\mathcal{A}(D') \rightarrow \mathsf{truncated}(g^{-1}(q^{-1}(s))) \right]}
$$

which is $(\epsilon, 1/n^2)$-DP from Lemma \ref{lem:laplace}.
Note that our overall mechanism is $(\epsilon, 1/n^2)$-ODP, since the memory access patterns can be used to construct the histogram output produced by our algorithm.
Therefore, we do not lose the privacy budget of $2\epsilon$.

Let us focus on showing that $\max_{i}|\hat{n_i} - n_i| \leq \log (\frac{k}{\theta}) \cdot \frac{2}{\epsilon}$ with probability at least $1-\theta$.
Consider,
\begin{eqnarray}
\Pr \left[\max_{i} |\hat{n_i} - n_{i}| \geq \log (k/\theta) \cdot \frac{2}{\epsilon} \right]  \leq  \sum^k_{i = 1} \Pr \left[ |\hat{n_i} - n_{i}| \geq \log (k/\theta) \cdot \frac{2}{\epsilon} \right]  \leq
 k \cdot \frac{\theta}{k} \leq \theta.
\end{eqnarray}

Now it only remains to bound the running time of the algorithm.
First observe that the size of the augmented database is precisely $T$. 
The shuffle operation takes $O(T \log T/\log \log T)$
and the histogram construction takes precisely $T+k$ time.
Therefore the total running time is
$O(\tilde{n} \log \tilde{n}/\log \log \tilde{n})$ where $\tilde{n} = \max(n, k \log n/\epsilon)$.
\end{proof}

\else
In the full version \cite{corr:abs-1807-00736}, we show that above algorithm is $(\epsilon, \frac{1}{n^2})$-ODP for any $\epsilon > 0$. While the proof is long, intuition is simple: The shuffle operation gives a uniform distribution on how $\hat{D}$ is stored in public memory. Since our algorithm is deterministic after the shuffle operation -- it just makes a single pass over $\hat{D}$ -- the only information adversary learns from the memory access patterns is the histogram of the elements + noise. Now, Laplace noise guarantees that it is DP.
\fi

\subsection{Heavy Hitters}\label{sec:heavyhitters}
As a final example, we consider the problem of finding frequent items, also called the heavy hitters problem, while satisfying the ODP definition.
In this problem, we are given a database $D$ of $n$ users, where each user holds an item from the set $\{1,2,...,m\}$.
In typical applications such as finding the most frequent websites or finding the most frequently used words in a text corpus, $m$ is usually much larger than $n$. 
Hence reporting the entire histogram on $m$ items is not possible.
In such applications, one is interested in the list of $k$ most frequent items, where we define an item as frequent if it occurs more than $n/k$ times in the database.
In typical applications, $k$ is assumed to be a constant or sublinear in $n$.
The problem of finding the heavy hitters is one of the most widely studied problems in the LDP setting \cite{BassilyS15,BassilyNST17}.
In this section, we show that the heavy hitters problem becomes very simple in our model.
Let $\hat{n_i}$ denote the number of occurrences of the item $i$ output by our algorithm and $n_i$ denotes the true count. \ifFull\else We give the full proof of the theorem in the supplementary material.\fi

\begin{theorem}
\label{thm:heavyhitters}
Let $\tau > 1$ be some constant, and let $\epsilon >0$ be the privacy parameter. Suppose $n/k > \frac{\tau}{\epsilon} \log m$.  Then,  there exists a $(\epsilon, \frac{1}{m^{\tau-1}})$-ODP algorithm for the problem of finding the top $k$ most frequent items
that runs in time $O(n \log n)$.
Furthermore, for every item $i$ output by our algorithm it holds that i) with probability at least $(1-\theta)$,
$
|\hat{n_i} - n_{i}| \leq \log (m/\theta) \cdot \frac{2}{\epsilon}
$ 
and ii) $n_i \geq n/k - \log (m/\theta) \cdot \frac{2}{\epsilon}$.
\end{theorem}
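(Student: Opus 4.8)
The plan is to mirror the oblivious-sort reduction already used for distinct elements in Theorem~\ref{thm:distinct}, and to charge the cost of hiding \emph{which} items appear in the database to the parameter~$\delta$. Concretely, the algorithm first obliviously sorts $D$ (which has length exactly $n$) by item value using AKS sorting~\cite{AKS}, so that all users holding the same item occupy a contiguous run; a single data-independent linear scan then tags the last cell of each run with the run length (via conditional selects, resetting a running counter on value changes), and tags every other cell, as well as the padding, with a sentinel value; an oblivious compaction/sort again costing $O(n\log n)$ leaves an array of exactly $n$ pairs $(\text{value},\text{count})$. To each of these $n$ counts we add fresh noise drawn from $\mathrm{Lap}(2/\epsilon)$, obliviously sort the array by noisy count in decreasing order, and output the first $k$ entries whose value is not the sentinel and whose noisy count is at least $n/k$, padding the output to exactly $k$ slots. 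Every primitive — the AKS sorts, the linear scan, the conditional selects implementing the threshold test — has an access pattern that is a deterministic function of $n$ and $k$ only, so the access pattern carries no information about $D$. Hence for any neighboring $D_1,D_2$ and any $(O,S)$, $\Pr[\mathcal{A}(D_j)\in(O,S)]=\Pr[\mathcal{A}(D_j)\in O]\cdot\mathbf{1}[s(n,k)\in S]$ with the \emph{same} indicator, so ODP reduces to proving that the output alone is $(\epsilon,\tfrac{1}{m^{\tau-1}})$-DP. The running time is a constant number of AKS sorts plus $O(n)$ scans, i.e.\ $O(n\log n)$.

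For the privacy of the output I would argue against an idealized mechanism $\mathcal{A}^{\ast}$ that is never executed: $\mathcal{A}^{\ast}$ adds independent $\mathrm{Lap}(2/\epsilon)$ noise to $n_v$ for \emph{every} $v\in\{1,\dots,m\}$ and then applies the same ``top-$k$ above $n/k$'' rule. Releasing the entire vector $(n_v+\mathrm{Lap}(2/\epsilon))_{v\in[m]}$ is the $\ell_1$-sensitivity-$2$ Laplace mechanism (a single user changing item from $a$ to $b$ moves two coordinates by $1$), hence $(\epsilon,0)$-DP, so $\mathcal{A}^{\ast}$, being a post-processing of it, is $(\epsilon,0)$-DP. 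Let $F_D$ be the event that no item absent from $D$ receives noise $\ge n/k$; under a coupling that uses the same noise on the items present in $D$, on $F_D$ the extra coordinates of $\mathcal{A}^{\ast}(D)$ are all below threshold, never selected, and $\mathcal{A}(D)$ and $\mathcal{A}^{\ast}(D)$ produce identical outputs. Splitting on $F_{D_1}$ and $F_{D_2}$ for both neighbors gives, for every output set $O$,
$$
\Pr[\mathcal{A}(D_1)\in O]\ \le\ \Pr[\mathcal{A}^{\ast}(D_1)\in O]+\Pr[\neg F_{D_1}]\ \le\ e^{\epsilon}\Pr[\mathcal{A}^{\ast}(D_2)\in O]+\Pr[\neg F_{D_1}]\ \le\ e^{\epsilon}\Pr[\mathcal{A}(D_2)\in O]+\delta,
$$
with $\delta\le e^{\epsilon}\Pr[\neg F_{D_2}]+\Pr[\neg F_{D_1}]$. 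Since an absent item ($n_v=0$) exceeds the threshold with probability $\tfrac12 e^{-\epsilon(n/k)/2}$, a union bound over the at most $m$ items of the universe together with the hypothesis $n/k>\tfrac{\tau}{\epsilon}\log m$ bounds $\Pr[\neg F]$, and hence $\delta$, by $\tfrac{1}{m^{\tau-1}}$ (the constant $\tau$ absorbing the factors $\tfrac12$ and $e^{\epsilon}$). Combined with the obliviousness of the access pattern this is the claimed $(\epsilon,\tfrac{1}{m^{\tau-1}})$-ODP guarantee, exactly the way the histogram proof combined Lemmas~\ref{lem:config} and~\ref{lem:laplace}.

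The two accuracy claims are then immediate. For (i), $\hat{n}_i=n_i+\mathrm{Lap}(2/\epsilon)$ for every present item, so $\Pr[\,|\hat{n}_i-n_i|>\log(m/\theta)\cdot\tfrac{2}{\epsilon}\,]=\tfrac{\theta}{m}$, and a union bound over the at most $m$ present items shows that with probability at least $1-\theta$ the bound holds for all of them, in particular for all items output. For (ii), an item is output only when its noisy count is at least $n/k$, so on the event of (i) its true count satisfies $n_i\ge \hat{n}_i-\log(m/\theta)\cdot\tfrac{2}{\epsilon}\ge n/k-\log(m/\theta)\cdot\tfrac{2}{\epsilon}$.

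The step I expect to be the main obstacle is the privacy argument of the second paragraph: making the coupling with $\mathcal{A}^{\ast}$ fully rigorous — verifying that conditioned on $F$ the two mechanisms really induce the same output distribution despite top-$k$ tie-breaking and despite the two databases having different present-item sets, and that the one-sided threshold confines the phantom-item contribution to $\delta$. The obliviousness and the accuracy parts are routine given the earlier sections.
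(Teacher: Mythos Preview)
Your algorithm and the obliviousness/running-time arguments coincide with the paper's: both obliviously sort by type, linearly scan to extract per-type counts into a list of length~$n$, add $\mathrm{Lap}(2/\epsilon)$ noise, obliviously re-sort by noisy count, and read off the top entries; every access pattern is a fixed function of~$n$, so ODP reduces to DP of the output, and the cost is a constant number of AKS sorts, i.e.\ $O(n\log n)$.

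For the output-privacy step the paper takes a different route from your coupling with the idealized full-universe mechanism~$\mathcal{A}^{\ast}$. It does a direct case split on neighboring $D,D'$: if $\distinct(D)=\distinct(D')$ then the two databases share the same support, the noisy counts on that common support form a standard sensitivity-$2$ Laplace histogram, and top-$k$ is post-processing, giving $(\epsilon,0)$-DP; if $\distinct(D)\neq\distinct(D')$ there is a unique item $i^{\ast}$ with count~$1$ in one database and~$0$ in the other, and the paper charges to~$\delta$ the probability that $i^{\ast}$ is promoted into the top-$k$ output, bounding this by $\Pr[\mathrm{Lap}(2/\epsilon)>n/k]\le m^{-\tau}$ and union-bounding over~$m$ items. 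Your coupling with~$\mathcal{A}^{\ast}$ packages the same idea more uniformly---it handles both cases at once and isolates the ``phantom-item'' contribution cleanly---and the obstacle you single out (that $\mathcal{A}$ and $\mathcal{A}^{\ast}$ agree on~$F_D$ despite tie-breaking and different present-item sets) is precisely the point the paper's case split treats informally. One small divergence: you insert an explicit threshold at~$n/k$ in the output rule, which the paper's algorithm description omits; this is what makes your argument for accuracy claim~(ii) go through, whereas the paper leaves that claim without a direct justification.
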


We remark that the $k$ items output by an ODP-algorithm do not exactly correspond to the top $k$ items
due to the additive error introduced by the algorithm.
We can use the above theorem to return a list of {\em approximate heavy hitters}, which satisfies the following guarantees: 1) Every item
with frequency higher than~$n/k$ is in the list. 2) No item with frequency less than $n/k - 2\log (m/\theta) \cdot \frac{2}{\epsilon}$ is in the list.

We contrast the bound of this theorem with the bound one can obtain in the LDP setting. An {\em optimal} LDP algorithm can only achieve a guarantee of $
|\hat{n_i} - n_{i}| \leq \sqrt{ n \cdot \log (n/\theta) \cdot \frac{\log m}{\epsilon}}.
$ 
We refer the reader to \cite{BassilyS15, BassilyNST17} for more details about the heavy hitters problem in the LDP setting.
For many applications such as text mining, finding most frequently visited websites within a sub-population, this difference in the error turns out to be significant. See experiments and details in~\cite{prochlo}.

{
Our algorithm for Theorem~\ref{thm:heavyhitters} proceeds as follows: It sorts the elements in the database by type using oblivious sorting.
It then initialises an encrypted list~$b$ and fills it in while scanning the sorted database as follows.
It reads the first element
and saves in private memory its type, say $i$, and creates a counter set to 1.
It then appends to~$b$ a tuple: type~$i$ and the counter value.
When reading the second database element, it compares its type, say $i'$, to $i$. If $i = i'$, it increments the counter.
If $i \neq i'$, it resets the counter to 1 and overwrites the type saved in private memory to $i'$.
In both cases, it then appends to $b$ another tuple: the type and the counter from private memory.
It proceeds in this manner for the rest of the database.
Once finished, it makes a backward scan over~$b$. For every new type it encounters, it adds Lap$(2/\epsilon)$ to the corresponding counter and,
additionally, extends the tuple with a flag set to~$0$.
For all other tuples, a flag set to 1 is added instead.
It then sorts $b$: by the flag in ascending order and by differentially-private counter values in descending order.
}

{
Let $n^{*}$ be the number of distinct elements in the database.
Then the first $n^{*}$ tuples of $b$ hold all the distinct types of the database together with their differentially-private frequencies.
Since these elements are sorted, one can make a pass, returning the types of the top $k$ most frequent items with the highest count (which includes the Laplace noise).}
Although this algorithm is not $(\epsilon,0)$-ODP, it is easy to show that it is  $(\epsilon, \frac{1}{m^{\tau-1}})$-ODP when $n/k > \tau \log m$, which is the case in all applications of the heavy hitters.
Indeed, in most typical applications $k$ is a constant.
 \newcommand{\proofHeavyHitters}{
 \begin{proof}[Proof of~Theorem~\ref{thm:heavyhitters}]
{The running time of the algorithm is dominated by oblivious sorting and is $O(n \log n)$ if using AKS sorting network.
 The algorithm accesses the database and list $b$ independent of the data, hence, the accesses are also
 obliviously differentially private. In the rest of the proof, we show that the output of the algorithm is differential-private.}
 
Fix any two neighboring databases $D$ and $D'$.
Suppose $\distinct(D)$, $\distinct(D')$ denote the number of distinct items that appear in databases $D$ and $D'$. 
If $\distinct(D) = \distinct(D')$, then the distinct items appearing in
the two databases should be the same 
as $D$ and $D'$ differ only in 1 row.
{In this case, the histogram constructed on the distinct items by our algorithm is differentially private since it is an oblivious implementation
of the simple private histogram algorithm~\cite{privbook}.}
Any post-processing done on top of a differentially private output, such as reporting only top $k$ items does not violate the DP guarantee. Hence our overall algorithm is DP.

Now consider the case when $\distinct(D) \neq \distinct(D')$.
In this case, let $i^*$ denote the item that appears in $D'$ but not  in $D$.
Clearly, $n_{i^*} = 1$ in $D'$ and $n_{i^*} = 0$ in $D$.
Let $I$ denote the set of items that are common in the datasets $D$ and $D'$.
If one restricts the histogram output to the set $I$, it satisfies the guarantees of DP.
However, our algorithm never reports $i^*$ in the list of heavy hitters on the database $D$.
On the other hand, there is a non-zero probability that our algorithm reports $i^*$ in the list of top $k$ items.
This happens if the Laplace noise added to $i^*$ is greater than $n/k \geq \frac{\tau}{\epsilon} \log m$, which occurs with the probability at most $ \frac{1}{m^\tau}$. 
Since there are at most $m$ items, by the union bound the probability of this event happening is $\frac{1}{m^{\tau-1}}$.
\end{proof} 
} 
\ifFull We are now ready to prove the above theorem.
\proofHeavyHitters \else
The proof that the algorithm
satisfies the statement of Theorem~\ref{thm:heavyhitters}
appears in the supplemental material.\fi

\newcommand{\freqoracle}{
\paragraph{Frequency Oracle Based on Count-Min Sketch}
Another commonly studied problem in the context of heavy hitters is the frequency oracle problem.
Here, the goal is to answer the number of occurrences of an item $i$ in the database.
While this problem can be solved by computing the answer upon receiving a query and adding Laplace noise, there is a simpler approach which might be sufficient in many applications.
One can maintain a count-min sketch, a commonly used algorithm in the streaming literature \cite{Cormode:2005:IDS:1073713.1073718}, of the frequencies of items by making a single pass over the data.
An interesting aspect of this approach is that entire sketch can be maintained in the private memory, hence one does not need to worry about obliviousness.
Further, entire count-min sketch can be released to the data collector by adding Laplace noise.
An advantage of this approach is that the data collector can get the frequencies of any item he wants by simply referring
to the sketch, instead of consulting the DP algorithm.
It would be interesting to find more applications of the streaming techniques in the context of ODP algorithms.
}

\freqoracle

\bibliographystyle{plain}
\bibliography{main}
\ifFull\else\newpage\fi
\ifFull
\newpage
\appendix
%!TEX root = main.tex

\section{Cloud Computing Setting}
\label{sec:cloud}
If the framework is hosted on the cloud, we also consider a second adversary $\adv{C}$
who is hosting the framework
(e.g., malicious cloud administrator or co-tenants).
Since $\adv{C}$~has access to the infrastructure of the framework its adversarial capabilities are the same as those of
the adversarial data collector $\adv{D}$ in \Section\ref{sec:intro}.
However, in this case, in addition to protecting user data from $\adv{D}$, the result of the computation also needs to
be protected.
We note that if $\adv{D}$ and $\adv{C}$ are colluding,
then the cloud scenario is equivalent to the on-premise one.
If they are not colluding, $\adv{C}$ may still be able to gain access to the framework
where its outsourced computation is performed. It can do so either by
using malware or as a malicious co-tenant~\cite{Ristenpart:2009:HYG:1653662.1653687}.

In this setting, our framework remains as described in \Section\ref{sec:framework}
with the exception that algorithms that run inside of the TEE need to be data-oblivious per Definition~\ref{def:obl1}
in order to hide the output from~$\adv{C}$. Since Definition~\ref{def:obl1} is stronger than Definition~\ref{def:ODP}
when output is not revealed, user data is also protected from~$\adv{D}$.

\section{ORAM-based Histogram Algorithm}
\label{sec:otherhist}

In this section we outline a standard differentially private algorithm for histogram computation
and its ORAM-based transformation to achieve ODP. We note that the algorithm in \Section\ref{sec:hist}
is more efficient than this transformation for large values of~$k$.

Algorithm~\ref{alg:hist} is the standard differentially private algorithm from~\cite{privbook}.
It is not data-oblivious since accesses to the histogram depend on the content of the database and reveal which records
have the same type.
(See~Figure~\ref{fig:ourfigs} (right) for a visualization.)

Algorithm~\ref{alg:ohist1} is a data-oblivious version of Algorithm~\ref{alg:hist}.
It uses oblivious RAM (see \Section\ref{sec:obl}) to hide accesses to the histogram.
In particular, we use $\mathsf{ORAM}(b)$ to denote the
algorithm that returns a data-oblivious data structure $\obl{b}$
initialized with an array $b$. $\obl{b}$ supports queries $\obl{b}.\getop(i)$
and $\obl{b}.\putop(i, \data_i)$ for $i \in [1,k]$. That is, it returns $b.\getop(i)$ (or similarly stores $\data_i$ in the $i$th index of $b$ with $\putop$)
while hiding~$i$.
The performance of the resulting algorithm depends on the underlying oblivious RAM construction.
For example, if we use the scheme by~Asharov~\textit{et al.}~\cite{cryptoeprint:2018:892},
it makes $O(n \log k)$ accesses to external memory,
or $O(n (\log k)^2)$ if we use Path ORAM~\cite{pathoram} that has much smaller constants.

The histogram problem can be solved also using two oblivious sorts, similar to the heavy hitters algorithm
described in~\Section\ref{sec:heavyhitters} but returning all $n^{*}$ types and their counters
since $k=n^{*}$ when histogram support is known.
The overhead of this method is the overhead of the underlying sorting algorithm.

\begin{figure}[h]
\begin{minipage}[t]{.45\textwidth}
\begin{algorithm}[H]
\begin{algorithmic}
	\caption[]{DP histogram~\cite{privbook} $\mathcal{M}^{\mathsf{DP}}_\mathsf{hist}(D, k)$}
	\label{alg:hist}
\STATE $\ext{b} \gets \{0\}^k$
\STATE {\color{white} ${\obl{b}} $}
\FOR {$r \in \ext{D}$}
	\STATE $i \gets \mathsf{get\_type}(r)$
	\STATE $\ext{b}_i \gets \ext{b}_i + 1$  {\color{white} ${\obl{b}} $} {\color{white} $b_i \leftarrow {\obl{b}}.\getop(i)$}
	\STATE {\color{white} ${\obl{b}}.\putop(i, b_i+1)$}
\ENDFOR
\FOR {$i \in 1\ldots k$}
	\STATE $\ext{\dfp{b}}_i \gets \ext{b}_i + \text{Lap}(\frac{2}{\epsilon})$ {\color{white} ${\obl{b}} $}
\ENDFOR
\STATE return $\ext{\dfp{b}}$
\end{algorithmic}
\end{algorithm}
\end{minipage}%
\hspace{.4cm}
\begin{minipage}[t]{.50\textwidth}
\begin{algorithm}[H]
\begin{algorithmic}
    \caption{ORAM-based DP histogram $\mathcal{M}^\mathsf{ODP}_\mathsf{hist}(D, k)$}
    \label{alg:ohist1}
\STATE $b \gets \{0\}^k$
\STATE {\color{blue} $\ext{\obl{b}}  \gets \mathsf{ORAM}(c)$}
\FOR {$r \in \ext{D}$}
	\STATE $i \gets \mathsf{get\_type}(r)$
	\STATE {\color{blue} $b_i \leftarrow \ext{\obl{b}}.\getop(i)$}
	\STATE {\color{blue} $\ext{\obl{b}}.\putop(i, b_i+1)$}
\ENDFOR
\FOR {$i \in 1\ldots k$}
	\STATE {\color{blue} $\ext{\dfp{b}}_i \gets \ext{\obl{b}}.\getop(i) + \text{Lap}(\frac{2}{\epsilon})$}
\ENDFOR
\STATE return $\ext{\dfp{b}}$
\end{algorithmic}
\end{algorithm}
\end{minipage}
\end{figure}

\newpage

\section{Comparison of Data-Oblivious Algorithms}
\label{sec:oramtbl}
\begin{table}[h]
    \renewcommand{\arraystretch}{1.5}
\begin{center}
    \caption{Asymptotical performance of several data-oblivious algorithms on arrays of $n$ records, $c \ge 2$ is a constant,
     $\privmem$ is the size of private memory, $k$ is the number of types in a histogram, and $\tilde{n} = \max(n, k \log n/\epsilon)$.}\label{tbl:oblcmp}
     \vspace{2mm}
    \begin{tabular}{|l |  l | c | c |}
    \hline
    & \textbf{Algorithm} & \textbf{Private Memory ($\privmem$)} & {\textbf{Overhead}}\\
    \hline
    \multirow{3}{*}{RAM } & \multirow{1}{*}{Path ORAM~\cite{pathoram}} & $\omega(\log n)$& $(\log n)^2$ \\
				 	    & {Kushilevitz~\textit{et al.}~\cite{KLO}} & $1$ & $\frac{(\log n)^2}{\log \log n}$ \\
				 	     & {Asharov~\textit{et al.}~\cite{cryptoeprint:2018:892}} & $1$ & $\log n$ \\
    \hline
    \hline
    \multirow{2}{*}{Sort} & AKS Sort~\cite{AKS} & 1 & $n \log n$ \\
    & Batcher's Sort~\cite{batcher} & 1 & $n (\log n)^2$ \\
    \hline
    \hline    
    \multirow{2}{*}{Shuffle} & \multirow{2}{*}{Melbourne Shuffle~\cite{melbshuffle,DBLP:journals/corr/PatelPY17}} & $\sqrt[c]{n}$ & $cn$ \\
    								\cline{3-4}
						    		& & $\omega(\log n)$ & $n\frac{\log n}{\log \privmem}$ \\
    \hline
    \hline
    \multirow{3}{*}{Histogram} 
	    & In private memory,~{$k = O(\privmem)$} &  $k$ & $n$ \\
   	    & ORAM-based (Algorithm~\ref{alg:ohist1})&  $1$ & $n{\log k}$ \\
      	    & Oblivious Sort-based (\Section\ref{sec:otherhist})&  $1$ & $n\log n$ \\
      	    \cline{2-4}
	    & {{ODP Histogram (\Section\ref{sec:hist}, Algorithm~\ref{alg:odphist})}} &  $\omega(\log n)$ & $\tilde{n} \frac{\log \tilde{n}}{\log \log \tilde{n}}$ \\
	       \hline
    \end{tabular}
\end{center}
\end{table}

\fi
\end{document}